\newtheorem{lemma}{Lemma}
\newtheorem{definition}{Definition}
\newcommand{\fangyu}[1]{\textcolor{black}{#1}}
\newcommand{\chen}[1]{\textcolor{black}{#1}}
\newcommand{\niu}[1]{\textcolor{black}{#1}}
\def\BibTeX{{\rm B\kern-.05em{\sc i\kern-.025em b}\kern-.08em
    T\kern-.1667em\lower.7ex\hbox{E}\kern-.125emX}}
\begin{document}

\title{Fast-HotStuff: A Fast and Robust BFT Protocol for Blockchains
}


\author{
\IEEEauthorblockN{Mohammad M. Jalalzai\IEEEauthorrefmark{1}\IEEEauthorrefmark{2}, Jianyu Niu\IEEEauthorrefmark{1}\IEEEauthorrefmark{2}, Chen Feng\IEEEauthorrefmark{1}\IEEEauthorrefmark{2} 
and 
Fangyu Gai\IEEEauthorrefmark{1}\IEEEauthorrefmark{2}
} 

\IEEEauthorblockA{ \IEEEauthorrefmark{1}School of Engineering, The University of British Columbia, Kelowna, Canada }

\IEEEauthorblockA{ \IEEEauthorrefmark{2}Blockchain@UBC, The University of British Columbia, Vancouver, Canada }

\IEEEauthorblockA{ 
{\{m.jalalzai, jianyu.niu, chen.feng, fangyu.gai\}@ubc.ca} }
}

\maketitle
\thispagestyle{plain}
\pagestyle{plain}
\begin{abstract}
The HotStuff protocol is a recent breakthrough in  Byzantine  Fault  Tolerant  (BFT)  consensus that enjoys both responsiveness and linear view change by creatively adding a round to classic two-round BFT protocols like PBFT. 
Despite its great advantages, HotStuff has a few limitations. First, the additional round of communication during normal cases results in higher latency. Second, HotStuff is vulnerable to certain performance attacks, which can significantly deteriorate its throughput and latency.
To address these limitations, we propose a new two-round BFT protocol called Fast-HotStuff,  which enjoys responsiveness and efficient view change that is comparable to the linear view-change in terms of performance. Our Fast-HotStuff has lower latency and is more robust against the performance attacks that HotStuff is susceptible to.


\end{abstract}

\begin{IEEEkeywords}
BFT, Blockchain, Consensus, Latency, Performance, Security.
\end{IEEEkeywords}

\section{Introduction}\label{Section:Introduction}
Byzantine Fault Tolerant (BFT) consensus has received considerable attention in the last decade due to its promising application in blockchains. 
Classic BFT-based protocols suffer from the scalability issue due to the high-cost \textcolor{black}{when the leader is replaced\footnote{A leader is also called primary. A primary proposes a value to the network. The network tries to reach a consensus to serially execute the proposal.}.}
For example, in PBFT,  $O(n^3)$ authenticators/signatures are exchanged and processed during a \textcolor{black}{ leader replacement} (where $n$ corresponds to the network size) \cite{Castro:1999:PBF:296806.296824}. More recent protocols like SBFT \cite{SBFT}, Zyzzyva \cite{Kotla:2008:ZSB:1400214.1400236} and BFT-Smart \cite{BFT-SMART} have reduced the number of signatures transmitted and processed within the network to $O(n^2)$. Still, the cost of view change is high for a large $n$, causing an enormous delay. This becomes an even bigger problem when a rotating primary\footnote{View is a number that is deterministically mapped to the ID of the primary/leader. Hence, when a view is changed, the primary is changed in the protocol.}
protocol is used where the primary is replaced after each block proposal.

Several state-of-the-art BFT protocols, including Tendermint~\cite{tendermint} and Casper FFG \cite{Casper}, have been proposed to address the issue of expensive view change. These protocols not only support linear message complexity by using advanced cryptography (such as aggregated or threshold signatures), but also enable frequent leader rotation by adopting the chain structure (which is popular in blockchains).
Moreover, these protocols can be pipelined, which can further improve their performances, and meanwhile, make them much simpler to \chen{implement}. These protocols operate with a synchronous core, and there is a pessimistic bound $\Delta$ on the network delay. Hence, the protocol has $O(\Delta)$ latency (instead of the actual network latency), therefore lacking responsiveness. 



\fangyu{The HotStuff protocol is the first to achieve both linear view change\footnote{In HotStuff \cite{Hot-stuff}, the linear view change has been defined as the linear number of signatures/authenticators sent over the wire during consensus.} and responsiveness, solving a decades-long open problem in BFT consensus. 
Linear view change enables fast leader rotation while responsiveness drives the protocol to consensus at the speed of wire $O(\delta)$, where $\delta$ is the actual network latency. Both are desirable properties in the blockchain space. }
In BFT protocols, a decision is made after going through several phases, and each phase usually takes one round of communication before moving to the next.
HotStuff introduces a chained structure borrowed from blockchain to pipeline all the phases into a unifying propose-vote\footnote{A primary proposes the next block once it receives $n-f$ votes for the previous block without waiting for the previous block to get committed.} pattern which significantly simplifies the protocol and improves protocol throughput.
Perhaps for this reason, pipelined HotStuff (also called chained HotStuff) has been adopted by Facebook's DiemBFT \cite{libra-BFT} (previously known as LibraBFT), Flow platform~\cite{Hentschel2020FlowSC}, as well as Cypherium Blockchain\cite{cypherium}. Throughout the paper, we refer to the pipelined version of HotStuff (not the basic HotStuff).

To achieve both properties (responsiveness and linear view change), HotStuff uses threshold signatures and creatively adopts a three-chain commit rule. In the three-chain commit rule, it takes two uninterrupted rounds of communication plus one another round (not necessarily uninterrupted) among replicas to commit a block. For example, for a block at round $r$ to get committed, two consecutive rounds $r+1$, $r+2$ and one another round $r+k$, where $k>2$ must be completed successfully. 
This is in contrast with the two-chain commit rule commonly used in most other BFT protocols~\cite{Castro:1999:PBF:296806.296824, tendermint, Pala, Casper, jalalzai2020hermes}.

In this paper, we mainly focus on the chained (pipelined) version of HotStuff with \textcolor{black}{a} rotating primary. In the rotating primary mechanism, a dedicated primary is assigned each time to propose a block. The rotating primary mechanism provides two additional important properties to a partially synchronous protocol (that cannot be achieved through the stable primary mechanism, where the primary is changed only upon failure). First, by randomly selecting the rotating primary nodes, the protocol limits the time window for malicious adversaries to perform a Denial-of-Service attack on the primary node \cite{libra-BFT}. Secondly, by rotating primary nodes, each node gets an equal opportunity to propose a block. This can be a very important property when nodes receive rewards by proposing blocks. \textcolor{black}{Moreover, pipelining significantly improves the protocol throughput.}


HotStuff has made trade-offs to achieve linear view change and responsiveness. First, it adds a round of consensus to the classic two-round BFT consensus. Secondly, 
to achieve higher throughput, a rotating primary in pipelined HotStuff proposes a block without waiting for its child block to get committed (proposals are pipelined). 

This results in the formation of forks, which can be exploited by Byzantine primaries to overwrite blocks from correct primaries before they are committed. 
It should be noted that classic BFT protocols \cite{Castro:1999:PBF:296806.296824,Jalal-Window,SBFT} do not allow fork formation. The main reason for \textcolor{black}{the} generation of forks is that a proposed block is not yet committed while the next block is being proposed by the following primary. Furthermore, forking also breaks the requirement for
three consecutive round\textcolor{black}{s} to commit a block, resulting in additional delay. As shown in  Figure \ref{fig:Simple-Forking-Example}, the last three blocks in replica $i$'s chain have been added during rounds $r$, $r+1$\textcolor{black}{,} and $r+2$ (where the block during $r+2$ is the latest block in replica $i$'s chain). Now, a Byzantine primary during the round $r+3$ proposes a block that points to the grandparent of the latest block as its parent. In this case, the block proposed during the round $r+3$ points to the block proposed during the round $r$ as its parent block instead of pointing to the block of the round $r+2$. This is acceptable according to  HotStuff's voting rule. A replica accepts a proposal and votes for it as long as it does not point to the predecessor of the grandparent of the latest block. Hence,  allowing \textcolor{black}{the} formation of forks in HotStuff results in lower throughput and higher latency.  Byzantine replicas will break the consecutive order of blocks added to the chain through forking. As it can be seen in Figure \ref{fig:Simple-Forking-Example} initially the blocks were added to the chain through consecutive rounds ($r$, $r+1$ and $r+2$). But after forking, the consecutive rounds in the chain \textcolor{black}{are} broken by the block in the round $r+3$. As a result, the new block sequence in the chain is $r$ and $r+3$. Since $r+3$ is the most recent block, \textcolor{black}{the} next primary will extend the block proposed during $r+3$. Therefore, averted transactions have to be re-proposed. Furthermore,  new consecutive rounds have to be completed for the block in round $r$, to get committed. This causes \textcolor{black}{an} increase in latency and throughput degradation. \textcolor{black}{It should be noted that forking attacks in HotStuff can be prevented if a stable primary mechanism (instead of rotating) is used. But in that case, the protocol will lose its ability to limit the window of time in which an adversary can launch a Denial-of-service attack (when primary nodes are selected randomly). Moreover, by not using rotating primary, each node will not be able to get an equal chance of adding a block to the chain.}
 \begin{figure}
    \centering
    \includegraphics[width=8cm,height=4cm]{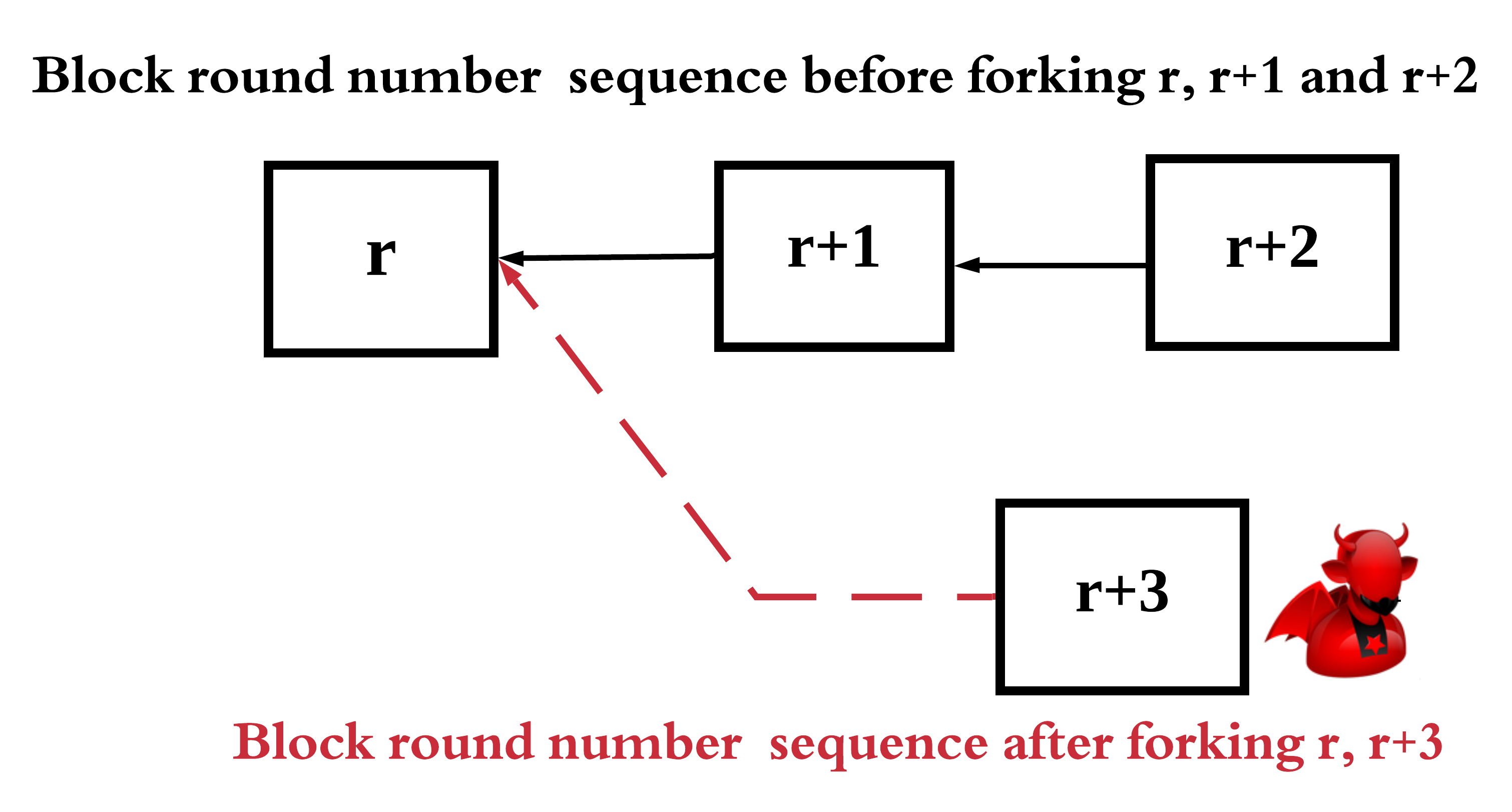}
    \caption{\textbf{The primary of the round $r+3$ performs the forking attack.}}
    \label{fig:Simple-Forking-Example}
\end{figure}

Therefore, a natural question arises here that:

\emph{ Is it possible to design a consensus protocol that avoids the trade-offs made by HotStuff while still \textcolor{black}{having} responsiveness and efficient view change \textcolor{black}{(with rotating primary mechanism)}?}

To answer the above question, we designed Fast-HotStuff. Fast-HotStuff is robust against forking attack and has lower latency in comparison to the HotStuff. Moreover, similar to the HotStuff, Fast-HotStuff provides efficient view change and responsiveness.  

Forking attack cannot be performed in Fast-HotStuff. This is \textcolor{black}{because} in Fast-HotStuff a primary has to provide proof that the proposed block is extending the latest block seen by the majority of replicas. Based on Figure \ref{fig:Simple-Forking-Example}, the primary of the round $r+3$ has to provide a proof that the parent of 
its proposing block is the latest block seen by \textcolor{black}{the} majority of replicas. Any replica will not vote for a block unless it satisfies the requirement for the proof of extending the latest block seen by the majority of replicas. This prevents Byzantine primaries to propose a block that points to an older block as its parent, as done in Figure \ref{fig:Simple-Forking-Example}. As a result, fork formation is avoided.

Similarly, in the absence of failure after two rounds of communication, a replica in Fast-HotStuff commit a block without waiting for the third round. The protocol guarantees that if a single replica commits a block, other replicas will eventually commit the same block at the same height. This guarantee can be achieved in HotStuff in the absence of failure after three rounds of communication.



There are mainly two types of view change in Fast-HotStuff, $1)$ Happy path, in which a block is successfully proposed by a primary and $n-f$ votes are collected for it by the next primary in line and $2)$ The unhappy path in which the primary failed to propose a block, majority of replicas timeout and move to the next view (primary).
In Fast-HotStuff in the absence of primary failure, during normal rotation or happy path of the primary (view change) no overhead is needed. This means when no primary failure occurs, then the primary rotation or the view change
requires only one authenticator to send and \textcolor{black}{verify}. Therefore, the happy path in Fast-HotStuff is linear. 
Unfortunately, we are not able to avoid the transfer of quadratic view change messages over the wire during view change due to primary failure (also called unhappy path). But we were able to reduce the number of signatures to be processed (verified) by each replica by the factor of $O(n)$. The number of authenticators (aggregated signatures) to be verified in Fast-HotStuff during \textcolor{black}{the} unhappy path is reduced to only two. This is in contrast with other two-chain responsive BFT protocols \cite{SBFT,Kotla:2008:ZSB:1400214.1400236,BFT-SMART,Prime,proteus1}
where at least $O(n)$ authenticators need to be verified by each replica ($O(n^2)$ for $n$ replicas).
Therefore, reducing the number of signatures to be verified significantly improves performance for large $n$.
The tradeoff for improvements achieved in Fast-HotStuff is a small overhead in the block each time a primary fails (unhappy path). For example, this overhead is at most $\approx 1.4\%$ of the block size (1MB) and  $\approx 0.7\%$ of $2MB$ block for the network size of $100$ replicas. We also show in Sections   \ref{Section: Design of Fast-HotStuff} and \ref{Section: Evaluation} that this overhead does not have a significant impact on performance. 

Overall, during the unhappy path, the view change is optimized by reducing the number of authenticators to be verified by each replica. Whereas during happy path the protocol enjoys linear view change similar to HotStuff.

The rest of the paper is organized as follows.   In Section \ref{Section: System Model}, \chen{we} describe the system model.  In Section \ref{Section: HotStuff overview}, we present an  overview of HotStuff protocol. 
Section \ref{Section:Tradeoffs}, discuss in detail the tradeoffs made by HotStuff to achieve linear view change and responsiveness.
Section \ref{Section: Design of Fast-HotStuff} provides algorithms for basic Fast-HotStuff and pipelined Fast-HotStuff protocols. Section \ref{Section: Proof-of-correctness} provide safety and liveness proofs for the basic and pipelined Fast-HotStuff. 
Evaluation and related work are presented in Sections \ref{Section: Evaluation} and \ref{Section: Related Work}, respectively. The paper is concluded in Section \ref{Section:Conclusion}.

\section{System Model and Preliminaries}\label{Section: System Model} 
\subsection{System Model}
We consider a system with $n = 3f+1$ parties (also called replicas) denoted by the set $N$ in which at most $f$ replicas are Byzantine. 
Byzantine replicas may behave \textcolor{black}{arbitrarily}, whereas correct (or honest) replicas always follow the protocol.
We assume a \emph{partial synchrony model} presented in \cite{Dwork:1988:CPP:42282.42283}, where there is a known bound $\Delta$ on message transmission delay (when the network is in the synchronous mode). This $\Delta$ bound holds after an unknown asynchronous period called \textit{Global Stabilization Time} (GST). \textcolor{black}{In practice, the system can only make progress if the $\Delta$ bound on message delivery remains for a sufficiently long time. Furthermore, assuming the $\Delta$ bound is everlasting simplifies the discussion.}
All exchanged messages are signed.
Adversaries are computationally bound and cannot forge signatures or message digests (hashes) but with negligible probability. \textcolor{black}{Moreover, Fast-HotStuff uses a static adversary model. In the static adversary model,  nodes are chosen to be corrupted at the beginning of the protocol.}

\subsection{Preliminaries}
\noindent \textbf{View and View Number.}
During each view, a dedicated primary is responsible for proposing a block. Each view is identified by a monotonically increasing number called view number. Each replica uses a deterministic function to translate the view number into a replica $ID$, which will act as primary for that view. Therefore, the primary (or leader) of each view is known to all replicas. 

\noindent \textbf{Signature Aggregation.} Fast-HotStuff uses signature aggregation \cite{Short-Signatures-from-the-Weil-Pairing,Boneh:2003,BDNSignatureScheme} to obtain a single collective signature of constant size instead of appending all replica signatures when a primary fails. 
As the primary $p$ receives the message $M_i$ with their respective signatures $\sigma_i \gets sign_i(M_i)$ from each replica $i$, the primary then uses these received signatures to generate an aggregated signature $\sigma \gets AggSign(\{M_i, \sigma_i\}_{i \in N})$. The aggregated signature can be verified by replicas given the messages $M_1,M_2,\ldots, M_y$, where $ 2f+1 \leq y \leq n$, the aggregated signature $\sigma$, and public keys $PK_1,PK_2,\ldots,PK_y$. To authenticate message senders as done in previous BFT-based protocols \cite{Castro:1999:PBF:296806.296824,Lamport:1982:BGP:357172.357176, Jalal-Window} each replica $i$ keeps the public keys of other replicas in the network.
Fast-HotStuff can use any signature scheme where message sender identities are known. HotStuff uses threshold signature \cite{Cachin:2005:Practical:Asynch, Short-Signatures-from-the-Weil-Pairing, Practical-Threshold-Signatures} where identities of message senders are not known. We also show in practice, that aggregated signature  used by Fast-HotStuff and threshold signatures used by HotStuff have comparable performance.

\noindent \textbf{Quorum Certificate (QC) and Aggregated QC.}
A block $B$'s quorum certificate (QC) is proof that more than $2n/3$ nodes (out of $n$) have voted for this block. A  QC 
 comprises an aggregated signature or threshold (from $n-f$ signatures from distinct replicas) built by signing block hash in a specific view. 
A block is certified when its QC is received, and certified blocks' freshness is ranked by their view numbers.
In particular, we refer to a certified block with the highest view number that a node knows as the \emph{latest/highest} certified block. The latest $QC$ a node knows is called $highQC$ (in HotStuff). \textcolor{black}{ Fast-HotStuff considers a global view for $highQC$ to avoid the additional round. Therefore, $highQC$ in Fast-HotStuff is the latest $QC$ held by the majority of replicas or the $QC$ with the higher view than the latest $QC$ held by the majority of replicas. The block proposal includes the $highQC$ as well as the proof of $highQC$ within itself. }
It should be noted that unlike classic BFT in HotStuff and Fast-HotStuff the $QC$ is also used to point to the parent block.  An aggregated $QC$ or $AggQC$ is simply a vector built from a concatenation of $n-f$ or $2f+1$ $QCs$.

\noindent \textbf{Block and Block Tree.} Clients send transactions to primary, who then batch transactions into blocks.  A block also has a field that it uses to point to its parent. In the case of HotStuff and Fast-HotStuff, a $QC$ (which is built from $n-f$ votes) is used as a pointer to the parent block.
Every block except the genesis block must specify its parent block and include a QC for the parent block. In this way, blocks are chained together. 
As there may be forks, each replica maintains a block tree (referred to as $blockTree$) of received blocks. 
 Two blocks $B_{v+1}$ and $B_{v+2}$ are conflicting if  neither $B_{v+1}$ is predecessor \textcolor{black}{n}or ancestor to the block $B_{v+2}$, nor  $B_{v+2}$ is predecessor or ancestor to the block $B_{v+1}$. But conflicting blocks
 have a common predecessor block ($B_{v}$ in this case).
 The parent block $B_{v}$ is the vertex of the fork, as shown in Figure \ref{fig:Conflicting blocks}.

\noindent \textbf{Chain and Direct Chain.} If a block $B$ is being added over the top of a block $B'$ (such that $B.parent=B'$), then these two blocks make one-chain. If another block $B^*$ is added over the top of the block $B$, then $B'$, $B$, and $B^*$ make two-chain and so on.
There are two ways that the chain or tree of blocks grows. First, the chain grows in a continuous manner where the chain is made between two consecutive blocks. For example, for two blocks $B$ and $B'$ we have $B.curView = B'.curView+1$ and $B.parent=B'$. Thus, a direct chain exists between block $B$ and $B'$. This is also called one-direct chain. If $B.curView = B'.curView+1$ and $B.parent=B'$ and  $B^*.curView = B.curView+1$ and $B^*.parent=B$, then we can say a two-direct chain exists among blocks $B'$, $B$ and $B^*$.
But there is also the possibility that one or more views between two views fail to generate block due to primary being Byzantine or network failure. In that case $B.curView = B'.curView+k$ where $k>1$ and $B.parent=B'$ and direct chain does not exist between $B'$ and $B$.

 \begin{figure}
    \centering
    \includegraphics[width=5.5cm,height=2cm]{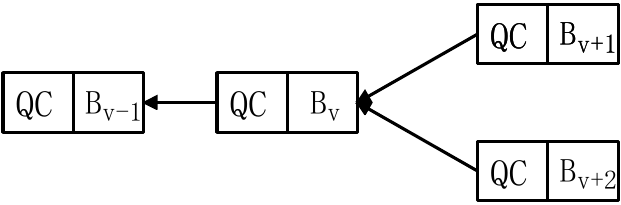}
    \caption{\textbf{A simple case of conflicting blocks}. Block $B_v$ is a fork vertex, and blocks $B_{v+1}$ and $B_{v+2}$ are two conflicting blocks.}
    \label{fig:Conflicting blocks}
\end{figure}

\section{HotStuff in a nutshell}
\label{Section: HotStuff overview}
\niu{In this section, we provide a brief description of the three-chain HotStuff protocol.} 
There are two variants of the three-chain HotStuff protocol, the basic and pipelined HotStuff. Since pipelined HotStuff is mainly used due to its higher throughput, here, we will focus on pipelined HotStuff. 
We describe the pipelined HotStuff operation beginning in the view $v$. At the beginning of the view $v$ a dedicated primary is selected by replicas operating in the view number $v$. The primary proposes a block $B_v$ (multicast  to all replicas) that extends the block certified by the $highQC$ it has seen. Upon receipt of the first proposed block from the primary, each replica sends its vote (a signature on the block) to the primary of view $v$. Upon receipt of $n-f$ votes, the primary of the view $v$ builds a $QC$ and forwards the $QC$ to the primary of the view $v+1$ \footnote{If the primary of the view $v$ is Byzantine, then it may not share the $QC$ for the view $v$ with the primary of the view $v+1$.}.  The $QC$ of the view $v$ is the $highQC$ that the primary of the view $v+1$ is holding. Therefore, the primary in the view $v+1$ will propose its block ($B_{v+1}$) with the $QC$ from the view $v$, as shown in the Figure \ref{fig:Pipelined-HotStuff}.

Every replica has to keep track of two local parameters in HotStuff $1$) $last\_voted\_view$, the latest view when the replica has voted and $2$) $last\_locked\_view$, the view of the grandparent block of the $last\_voted\_view$. As it can be seen from the Figure \ref{fig:Pipelined-HotStuff}, when a replica  votes for the block $B_{v+2}$ during view $v+2$, its $last\_voted\_view$ will be $v+2$ and it locks the grandparent block of the block $B_{v+2}$. Hence, $last\_locked\_view$ will be the view $v$. In order to vote, each replica makes sure that the received block satisfies the voting conditions. Voting conditions include $1)$ the proposed block extends the block at the $last\_locked\_view$ or $2)$ the view number of the received block's parent is greater than the $last\_locked\_view$. If either of the voting conditions is satisfied, then the replica will vote and update its $last\_voted\_view$ and the $last\_locked\_view$ to some new values. 
For example, when the replica receives the block for view $v+3$ ($B_{v+3}$), it then makes sure the voting conditions are met. As it can be seen, the block $B_{v+3}$ extends the  $last\_locked\_view$ (view $v$), satisfying the first condition (though the second condition is also satisfied as the view of the parent of $B_{v+3}$ is greater than $v$ or $v+2>v$). Therefore, the replica will vote for $B_{v+3}$. The replica then increments its $last\_voted\_view$ to $v+3$, its $last\_locked\_view$ to $v+1$. Then the replica checks its \textit{block tree} to see if there is any block that needs to be committed. If there are three blocks added over the top of each other during consecutive (uninterrupted) views ($B_v, B_{v+1}, B_{v+2}$) and extended by at least one another block (in this case $B_{v+3}$), then the replica will commit the block $B_v$ and all its predecessor blocks. The first three blocks added to the chain during consecutive uninterrupted views make a two-direct chain. For the first block in the two-direct chain to get committed, an additional block (may not form a direct chain) is need to extend it and make it a  three-chain.

Overall, the HotStuff primaries propose blocks, and replicas vote on the proposed blocks. Replicas commit a block if there is a  two-direct chain extended by another block, making a three-chain. Replicas move to the next view either when they receive a valid block with a $QC$ or when the primary fails to propose the block, and replicas timeout (wait for a proposal until timeout period). 

\begin{figure}
    \centering
    \includegraphics[width=8cm,height=1.2cm]{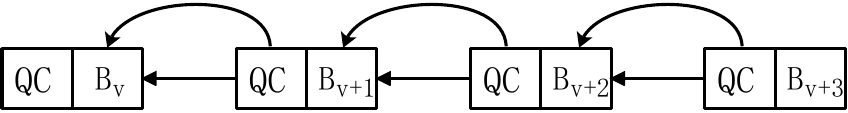}
    \caption{\textbf{The chain structure of pipelined HotStuff.}  Curved arrows denote the Quorum certificate references.}
    \label{fig:Pipelined-HotStuff}
\end{figure}

\section{Limitations of HotStuff}
\label{Section:Tradeoffs}
\subsection{Higher Latency During Happy Path}
 To achieve linear view change and responsiveness, HotStuff adds a round of communication. \textcolor{black}{
 In HotStuff the message propagation depends on the primary. Therefore, at the end of the second round, a replica cannot commit a block because it is not sure if other replicas will eventually complete the same round successfully. Only at the end of the three successful consecutive rounds for a block, a replica can be sure that other replicas at least have completed the first two rounds successfully. Though at the end of the second round, a replica in HotStuff locks on the $last\_locked\_view$, as it suspects some replicas might have committed the value at the $last\_locked\_view$. The lock is only released if the primary proves that no one has committed the value at the $last\_locked\_view$.}

 \textcolor{black}{
 The lock provides the safety guarantee, but a third round is required to achieve liveness. If a replica commits a block after two rounds, then the $last\_locked\_view$ will be the view of the parent of the most recent block. In this scenario, there can be a case where the $QC$ lock generated for the $last\_locked\_view$ is missing for the majority of nodes infinitely, hence creating a hidden lock problem} \cite{No-Commit-Proofs, Hot-stuff}.  
 
 \textcolor{black}{
 For example, during a view $v-1$ a primary proposes a block $b$. Upon receipt of $b$, each replica votes for $b$ and the primary for the view $v$ collects votes and builds a $QC$ ($qc$) certificate/lock for $b$. The primary of $v$ can only send a new proposal $b'$ (containing the $QC$ for $b$)  to a single replica $r_v$ before it fails. Now replica $r_v$ has the most recent lock (the lock/$QC$ for the view $v$). Replicas move to the view $v+1$ and the primary of the view $v+1$ receives responses (containing the lock for $v-1$ as the most recent lock) from $2f+1$ replicas (not including $r_v$). The primary for the view $v+1$ prepares a message $b'$ and adds the $QC$ $qc'$ (such that $qc'.view=v-1$). 
 Since $qc'$ is not the latest lock and $b'$ conflicts with $b$ (forming a fork), replica $r_v$ rejects it and remains locked with the lock $qc$. The primary for the view $v+1$ sends $b'$ only to one replica $r_{v+1}$, before it fails. The replica $r_{v+1}$ gets locked with the $QC$ for block $b'$ in the view $v+1$. This process can repeat indefinitely. }
 
\textcolor{black}{
Replicas $r_v$ and $r_{v+1}$ reject the proposal as they are holding a more recent lock ($QC$) than the one in the proposal. Unlocking these replicas requires assuring them that the block (they have locked) has not been committed. 
 In Fast-HotStuff, during a primary failure, the next primary aggregates and sends back all the locks it has received from replicas as proof (while only two locks need to be verified). This allows any replica holding the higher lock to release it (as it knows that the locked value has not been committed by any replica) and take part in consensus.}

 Therefore, the HotStuff protocol achieves consensus during the happy path in three rounds (whereas many other BFT protocols often use two rounds) \cite{Castro:1999:PBF:296806.296824,SBFT, jalalzai2020hermes, BFT-SMART}. Such additional 
 latency might discourage blockchain developers from building applications on top of HotStuff, as users have to wait more time in order to receive the notification that their transactions have been committed. 

\subsection{Forking Attack}
A Byzantine primary in pipelined HotStuff can deliberately generate forks to override the blocks from the correct (honest) primary~\cite{niu2021}. As it can be seen in Figure \ref{fig:Fork-attack}, blocks $B_{v+1}$ and  $B_{v+2}$ are honest blocks and the Byzantine primary will propose the block $B_{v+3}$ using the $QC$ for the block $B_v$. Other replicas will vote for $B_{v+3}$ as it satisfies the voting rule (
$B_{v+3}$ extends $B_{v}$  ). Since $B_{v+3}$ has higher view than $B_{v+2}$, next primary will extend $B_{v+3}$. Byzantine primary may choose to use the $QC$ from the block $B_{v+1}$ to override only a single block $B_{v+2}$. This may happen in case when the block $B_v$ has been proposed by another Byzantine primary. 

As a result of forking, the resources (computing, bandwidth, etc.) spent on forked blocks are wasted.  Similarly,
transactions of reverted blocks will have to be re-proposed by another primary. 
Hence, these transactions encounter additional latency.  Furthermore, since forking overrides the honest blocks it also breaks the direct chain relation among blocks (refer to Section \ref{Section: System Model}). 
This results in delaying the block commitment as the commit rule says that for a block to get commit, at least the first two blocks over the top of it should make a two-direct chain (two blocks from consecutive views). 
Forking attack makes it difficult to design an incentive mechanism like PoS over the top of HotStuff as blocks from correct primaries will not be able to get committed, and hence correct primaries will be less likely to earn rewards\footnote{More details about this will be provided in our future work.}.
  \begin{figure}
    \centering
    \includegraphics[width=6.5cm,height=2cm]{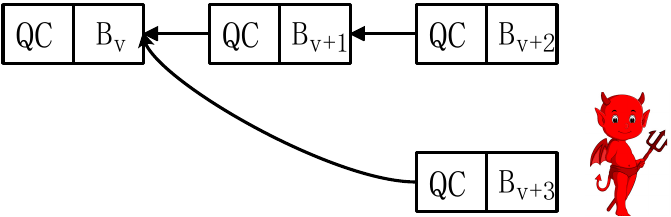}
    \caption{\textbf{Forking attack by the primary of the view v+3.}}
    \label{fig:Fork-attack}
\end{figure}

 \begin{figure}
    \centering
    \includegraphics[scale=0.33]{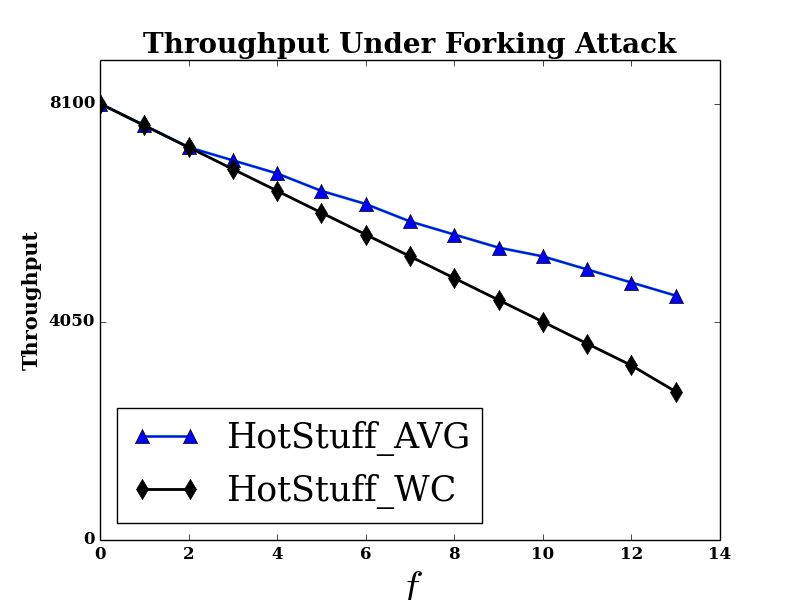}
    \caption{\textbf{Throughput decreases under Forking Attack.}
    \label{fig:Load-Tolerance}}
\end{figure}

   \begin{figure*}
    \centering
    \includegraphics[width=16cm,height=3cm]{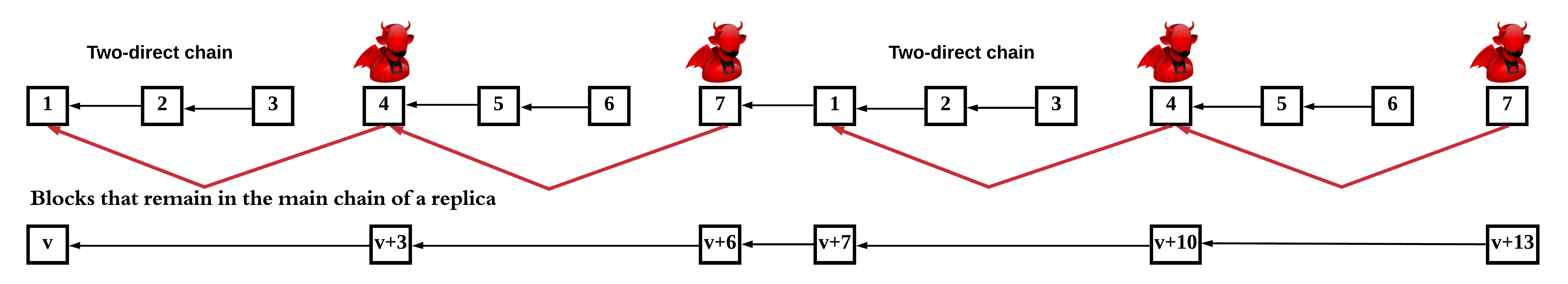}
    \caption{\textbf{ Worst case forking attack that is possible under round-robin mechanism of primary selection.}}
    \label{fig:Liveness-Failure-New}
\end{figure*}
  \textbf{Average Performance Due to Forking.}
    It is  important to know how HotStuff's average performance is affected by the forking attack. To find this out, we performed an experiment to show the tension between the number of faulty (Byzantine) replicas and the throughput in HotStuff as shown in Figure \ref{fig:Liveness-Failure-New}.   The network size $n=40$ and $f=13$.   The fraction of Byzantine replicas was initially zero and gradually increases to $f$. The throughput achieved in the absence of forking attack (in the absence of Byzantine replicas) is $8100$ tx/sec, but as the fraction of $f$ increases ($x-axis$) the average capacity of the network to process transactions decreases under forking attack as shown by the blue curve. 

\textbf{Worst Case Performance Due to Forking.}
As previously mentioned, each Byzantine primary can override at most two blocks from correct primaries. Worst case is possible when primaries are selected in  round-robin manner.
We show a pattern of primary selection, where the HotStuff will fulfill the two-direct chain commit condition only once. This means, after $n$ views, only once a  block (and any prior uncommitted block in its chain) that meet the commit condition will be committed.
In this case, each Byzantine primary avert two blocks proposed by correct/honest primaries. Hence, $f$ Byzantine primaries will avert $2f$ honest blocks. As a result, during $n$ views (assuming no timeout occurs), only $f+1$ blocks get committed. In these $f+1$ blocks committed during $n$ views only $1$ block is from a correct primary. Whereas, the remaining $f$ committed blocks are from Byzantine primaries. This empowers Byzantine primaries to censor or significantly delay specific transactions.
As it can be seen in Figure \ref{fig:Liveness-Failure-New} two-direct chain condition is satisfied after $n$ views (during $v+2$ and $v+9$, when primary $3$ is in charge). Once two-direct condition is met, for example in view $v+9$, $f+1=2+1=3$ blocks from views $v+7, v+6$ and $v+3$ get committed. Whereas primaries in views $v+7$ and $v+3$ are Byzantine. 

The worst-case throughput (under the different number of Byzantine replicas), in HotStuff, is presented by the black curve in the Figure \ref{fig:Load-Tolerance}.
As mentioned, only $f+1=13+1$ blocks from primaries will end up in the main chain. Where $f$ number of blocks belonging to Byzantine primaries and only one block is from a correct primary. Here, Byzantine primaries  do not reduce the size of their proposed block. The throughput of the network can be further reduced if Byzantine primaries propose smaller blocks. \textcolor{black}{It should be noted that the worst case can be avoided by determining the primary rotation
order via opening a common coin \cite{Common-Coin1,Random-Oracles-in-Constantinople}. }


 \section{ Fast-HotStuff}
\label{Section: Design of Fast-HotStuff}
Fast-HotStuff operates in a series of views with monotonically increasing view numbers. Each view number is mapped to a unique dedicated primary known to all replicas. 
The basic idea of Fast-HotStuff is simple. The primary has to convince voting replicas that its proposed block is extending the block pointed by the $highQC$. 
Once a replica is convinced,
it checks if a two-chain is formed over the top of the parent of the block pointed by the $highQC$ (the first chain in the two-chain formed has to be a one-direct chain in case of pipelined Fast-HotStuff). 
Then a replica can safely commit the parent of the block pointed by the $highQC$.

Unlike HotStuff (where a block is committed in three rounds), in \fangyu{the} Fast-HotStuff protocol, a replica $i$ can optimistically commit/execute a block during 
at the end of \fangyu{the} second round while guaranteeing that all other correct replicas will \fangyu{also} commit the same block in the same sequence eventually. 
This guarantee is valid when either one of the two conditions is met: 
\textbf{1) the block proposed by the primary is built by using the latest $QC$ that is held by the majority of replicas, or 2) by a $QC$ higher than the latest $QC$ being held by a majority of  replicas.}
Therefore, the primary has to incorporate proof\footnote{The proof  of $highQC$ is the proof provided by the primary in the proposed block that shows the latest $QC$ in the block is the latest $QC$ held by the majority of replicas or the latest $QC$  included in the block is more recent than the latest $QC$ held by the majority of replicas.} of the $highQC$ (being held by the majority of replicas or higher than the $highQC$ held by the majority of replicas) in every block it proposes, which can be verified by every replica.

Interestingly, the inclusion of the proof of the $highQC$ by the primary in Fast-HotStuff \fangyu{also} enables us to design a responsive two-chain consensus protocol. Indeed, the presence of the proof for the $highQC$ guarantees that a replica can safely commit a block after two-chain without waiting for \fangyu{the} maximum network delay as done in two-chain HotStuff \cite{Hot-stuff}, Tendermint \cite{tendermint}\chen{,} and Casper \cite{Casper}. Therefore, Fast-HotStuff achieves responsiveness only with a two-chain structure (which means two rounds of communication) in comparison to the three-chain structure in HotStuff (three rounds of communication).

     \SetKwFor{Upon}{upon}{do}{end}
\SetKwFor{Check}{check always for}
{then}{end}
\SetKwFor{Checkuntil}{check always until}
{then}{end}
\SetKwFor{Continuteuntil}{Continue until}
{then}{end}

\begin{algorithm}
\DontPrintSemicolon 
\small
\SetKwBlock{Begin}{Begin}{}
\SetAlgoLined
\caption{Utilities for replica \textit{i}}
\label{Algorithm: Utilities for replica i}
\SetAlgoLined

\SetKwFunction{Function}{SendMsgDown}
\SetKwFunction{FuncGetMsg}{MSG}
\SetKwFunction{VoteForMsg}{VoteMSG}
\SetKwFunction{CreatePrepareMsg}{CreatePrepareMsg}
\SetKwFunction{BQC}{GenerateQC}
\SetKwFunction{PQC}{ThresholdGenerateQC}
\SetKwFunction{Matchingmsgs}{matchingMsg}
\SetKwFunction{CreateAggQC}{CreateAggQC}
\SetKwFunction{MatchingQCs}{matchingQC}
\SetKwFunction{safeblock}{BasicSafeProposal}
\SetKwFunction{PipelinedSafeBlock}{PipelinedSafeBlock}
\SetKwFunction{aggregateMsg}{AggregateMsg}
\SetKwProg{Fn}{Func}{:}{}

\Fn{\CreatePrepareMsg{type, aggQC, qc, cmd}}{
b.type $\gets$ type\\
b.AggQC $\gets$ aggQC\\
b.QC $\gets$ qc\\
b.cmd $\gets$ cmd\\
return \textcolor{blue} {{b}}
}
\textbf{End Function}\\

\Fn{\BQC{$V$}}{
qc.type $\gets$ m.type : $m \in V$\\
qc.viewNumber $\gets$ m.viewNumber : $m \in V$\\
qc.block $\gets$ m.block :  $m \in V$\\
qc.sig $\gets$ AggSign( qc.type, qc.viewNumber, qc.block,i, \{m.Sig $|  m \in V$\})\\

return \textcolor{blue} {qc}
}
\textbf{End Function}\\


\Fn{\CreateAggQC{$\eta_{Set}$}}{
aggQC.QCset $\gets$ extract QCs from $\eta_{Set}$\\
aggQC.sig $\gets$
AggSign( curView, \{$qc.block |qc.block \in aggQC.QCset $\}, \{$i|i \in N$\}, \{$m.Sig |  m \in \eta_{Set}$\})\\
return \textcolor{blue} {aggQC}
}
\textbf{End Function}\\

\Fn{\safeblock{b, qc}}{
return \textcolor{blue}{\textit{b extends from $qc.block$}
}}
\textbf{End Function}\\

\Fn{\PipelinedSafeBlock{b,qc, aggQC}}{
\If{QC}{
return \textcolor{blue}{$b.viewNumber \geq curView \land b.viewNumber==qc.viewNumber+1$}
}
\If{AggQC}{
$highQC \gets$ extract highQc from $AggregatedQC$ \\
return \textcolor{blue}{\textit{ b extends from $highQC.block$}}
}

}
\textbf{End Function}
\end{algorithm}

Since HotStuff has two three-chain variants (the basic and the pipelined HotStuff), therefore we present two-chain basic Fast-HotStuff as well as pipelined Fast-HotStuff protocols. First, we present the basic optimized two-chain Fast-HotStuff and then extend it to an optimized two-chain pipelined Fast-HotStuff protocol.
Moreover, basic Fast-HotStuff needs the proof for the $highQC$ in the form of $n-f$ $QC$s attached in the proposed block by the primary in order to be able to guarantee safety and liveness as a two-chain protocol. In the case of blockchains since block size is usually large (multiples of Megabytes), this overhead has little effect on performance metrics like throughput and latency. We also show that the protocol requires only two $QC$s to be verified instead of $n-f$ $QC$s.
Then, we further optimized pipelined Fast-HotStuff by introducing proposal pipelining. Moreover, for pipelined Fast-HotStuff, the block includes a single $QC$ during the happy path, and similar to the basic Fast-HotStuff only two $QC$s need to be verified in case of the primary failure. 

 \subsection{Basic Fast-HotStuff}
 The algorithm for two-chain responsive basic HotStuff is given in Algorithm \ref{Algorithm: Basic-Fast-HotStuff}. 
 Below, we describe how the basic Fast-HotStuff algorithm operates. Fast-HotStuff operates in three phases \textsc{prepare}, \textsc{precommit} and \textsc{Commit}. The function of each phase is described below.

 \textbf{PREPARE phase.}
  Initially, the primary replica waits for  new-view messages $\eta = \langle "NEWVIEW" , curView, prepareQC,i \rangle$ from $n-f$ replicas. The \textsc{NEWVIEW} message contains four fields indicating message type (\textsc{NEWVIEW}),  current view ($curView$), the latest $PrepareQC$  
  known to replica $i$, and replica id $i$.  The $NEWVIEW$ message is signed by each replica over  fields $\langle curView, prepareQC.block, i \rangle$. $prepareQC.block$ is presented by the hash of the block for which $prepareQC$ was built (hash is used to identify the block instead of using the actual block). 
  The primary creates and signs a prepare message ($B=\langle "Prepare", AggQC, commands, curView, h,i \rangle$)  and broadcasts it to all replicas, as shown in Algorithm \ref{Algorithm: Basic-Fast-HotStuff}. 
  We can also use the term block proposal or simply block for $PREPARE$ message. 
  $AggQC$ is the aggregated $QC$ build from valid $\eta$ messages collected from $n-f$ replicas.

  Upon receipt of a $PREPARE$ message $B$ from the primary, a replica $i$ verifies $AggQC$,  extracts \textcolor{black}{the $QC$ with the latest/highest view among $QC$s} ($highQC$) as well from $PREPARE$ message, and then checks if the proposal is safe. Verification of $AggQC$ involves verification of aggregated signatures built from $n-f$ $\eta$ messages and verification of the latest $PrepareQC$. Signature verification of each $QC$ is not necessary, a replica only needs to make sure messages are valid. 
  $BasicSafeProposal$ predicate \textcolor{black}{(shown in Algorithm \ref{Algorithm: Utilities for replica i} and used in Algorithm \ref{Algorithm: Basic-Fast-HotStuff}, line 12)} makes sure a replica only accepts a proposal that extends from the $highQC.block$. \textcolor{black}{In other words, a proposal is safe if the primary provides the proof of $highQC$ for the $QC$ in the proposal.}
  
  If a replica notices that it is missing a block, it can download it from other replicas.
  At the end of \textsc{PREPARE} phase, each replica sends its vote $v$  to the primary. Any vote message $v$ sent by a replica to the primary is signed on tuples $\langle type, viewNumber, block, i\rangle$. Each vote has a type, such that $v.type \in \{PREPARE, PRECOMMIT\}$ (Here again block hash can be used to represent the block to save space and bandwidth).
  
  \textbf{PRECOMMIT phase.}
 The primary collects  \textsc{PREPARE} votes from $n-f$ replicas and builds  $PrepareQC$. The primary then broadcasts $PrepareQC$ to all replicas in \textsc{PRECOMMIT} message. Upon receipt of a valid \textsc{PRECOMMIT} message, a replica will respond with a \textsc{PRECOMMIT} vote to the primary. Here (in line 25-27 Algorithm \ref{Algorithm: Basic-Fast-HotStuff}) replica $i$ also checks if it has committed the block for $highQC$ called $highQC.block$. Since the majority of replicas have voted for $highQC.block$, it is safe to commit it. \looseness=-1
 
   \textbf{COMMIT phase.}
   Similar to \textsc{PRECOMMIT} phase, the primary collects  \textsc{PRECOMMIT} votes from $n-f$ replicas and combines them into $PrecommitQC$. As a replica receives and verifies the $PrecommitQC$, it executes the commands. The replica increments $newNumber$ and begins the next view.
   
    \textbf{New-View.}
    Since a replica always receives a message from the primary at a specific viewNumber, therefore, it has to wait for a timeout period during all phases. \textcolor{black}{NEXTVIEW is an auxiliary utility that determines the timeout period. As a replica waits for the receipt of a message (during any phase),} if NEXTVIEW (viewNumber) utility interrupts waiting, the replica increments viewNumber and starts the next view.

\begin{algorithm}
\small
\SetKwFor{ForEachin}{foreachin}{}{}
\caption{Basic Fast-HotStuff for replica $i$}
\label{Algorithm: Basic-Fast-HotStuff}
\ForEachin{curView $\gets$ 1,2,3,... }{ 
\textcolor{blue}{$\triangleright$  Prepare \text{ Phase}} \\
\If{{i is primary}}{
\text{wait until $(n - f)$  $\eta$ messages are received}: 
    $\eta_{Set} \gets \eta_{Set} \cup \eta $ \\
    $aggQC \gets CreateAggQC(\eta_{Set}) $\\
     $B\gets$ CreatePrepareMsg(\textit{Prepare,$aggQC$, client’s command})\\

    broadcast $B$

}
\If{{i is normal replica}}{
 
 \text{wait for prepare $B$  from primary(curView)}\\
$highQC \gets$ extract highQc from $B.AggQC$\\


\If{BasicSafeProposal($B$, $highQC$)}  {

Send vote $v$ for prepare message to primary(curView)

}

}
\textcolor{blue}{$\triangleright$  $Pre-Commit$  \text{ Phase}} \\
\If{{i is primary}} {
\text{wait for ($n - f$) prepare votes}: V 
$\gets V \cup v$ \\

PrepareQC $\gets$ BasicGenerateQC(V)\\
broadcast PrepareQC
}

\If{{i is normal replica}}{
\text{wait for  $PrepareQC$  from primary(curView)}\\
Send Precommit vote $v$ to primary(curView)\\
\If{have not committed $highQC$.block}{
commit $highQC$.block
}

}
\textcolor{blue}{$\triangleright$  $Commit$ Phase} \\
\If{{i is primary}}{

wait for $(n - f)$ votes: $V \gets V \cup v$ \\   
PrecommitQC $\gets$ GenerateQC(V)\\
broadcast PrecommitQC\\
}
\If{{i is normal replica}}{
\text{wait for  PrecommitQC  from primary(curView)}\\

execute new commands through $PrecommitQC$.block\\ respond to clients\\
\textcolor{blue}{$\triangleright$  New-View} \\
\Check{
nextView interrupt} {goto this line if nextView(curView) is called during “wait for” in any phase\\
Send $\eta$ to
primary($curView + 1$)\\
}
}
}

\end{algorithm}

\textbf{Efficient View Change\textcolor{black}{.}} 
As a replica moves to the next view, it will send its $NEWVIEW$ ($\eta$) to the next primary. The primary aggregates $n-f$ $\eta$ messages and their signatures into an $AggQC$ and sends back to all replicas. 
Upon receipt of a block containing an $AggQC$, first the aggregated signature for the $n-f$  ($\eta$) messages needs to be verified.
The first check (verification of aggregated signature of $AggQC$) verify that the  $AggQC$ that is built from $\eta$ messages has $n-f$ valid $QC$s ($QC$s come from $n-f$ distinct replicas). It further guarantees that at least $f+1$ $QC$s out of $n-f$ are from correct replicas. 

Therefore, a replica only needs to find a $QC$ with the highest view among $QC$s in $AggQC$ by looping over the view numbers of $QC$s. Next, the replica has to  verify the aggregated signature of $highQC$ or latest $QC$. 
As a result, we do not need to verify the remaining $n-f-1$ $QC$s as a replica only needs to verify  $highQC$ and make sure if the block extends $highQC.block$. 
This helps to reduce the signature verification cost for $AggQC$ ($n-f$ $QC$s) during view change by the factor of $O(n)$ independent of signature scheme used. 
This optimization can be used by any two-chain BFT-based consensus protocol during view change where replicas have to receive and process $n-f$ $QC$s.


    \textcolor{black} {This optimization can also be used by the primary while receiving $\eta$ messages. The primary only needs to verify the signature of the sender of the $\eta$ message and the latest $QC$ ($highQC$) among $n-f$ $QC$s ($prepareQC$ in $\eta$).} 

 \textcolor{black}{While receiving a block with the $AggQC$ from the primary,
there is a possibility that the $\eta$ message containing $highQC$ is invalid (because the primary is malicious).    It means $\eta$ does not meet formatting requirements, its view number, or the $highQC$ is invalid (its aggregated signature cannot be verified). Formatting requirements involve incorrect message types or fields.  In this case, the replica can reject the block proposal. Furthermore, a replica can use this proof to punish (blacklist) the primary. The proof can be shared with other replicas and a correct primary will use the proof to propose a transaction to blacklist the Byzantine primary \cite{jalalzai2020hermes}.}
\subsection{Pipelined Fast-HotStuff}

 Pipelined Fast-HotStuff has been optimized in different ways in comparison to the basic Fast-HotStuff. First, similar to the pipelined HotStuff, it pipelines requests and proposes them in each phase to increase the throughput. Secondly, during normal view change when no primary failure occurs, the protocol only requires the block to carry a single $QC$ instead of $n-f$ $QC$s. The proof of  $highQC$ in pipelined Fast-HotStuff carries a small overhead ($AggQC$) in the block during view $v$ if the primary in view $v-1$ fails.  The pipelined Fast-HotStuff algorithm can be summarized as follows: The primary of the view $v$ collects $n-f$ votes, builds a $QC$, adds the $QC$ to the block, and propose the block for the view $v$. If the view $v-1$, has failed, then the primary, collects $n-f$ $\eta$ (\textsc{NEWVIEW}) messages. Subsequently, the primary aggregates $\eta$ messages into $AggQC$, adds $AggQC$ to the block and proposes the block for the view $v$.

 Blocks can be added into the chain either during the happy path with no failure or the primary fails, and the next primary will have to add its block into the chain.
Unlike HotStuff, pipelined Fast-HotStuff addresses these two cases differently. Indeed, there are two ways a primary can convince replicas that the proposed block extends the $highQC$. 
In \textcolor{black}{the} case of contiguous chain growth (happy path), a primary can only propose a block during the view $v$ if it can  build a $QC$ from $n-f$ votes received during the view $v-1$. 
Therefore, in a contiguous case, each replica signs the vote 
, the primary will build a $QC$ from $n-f$ received votes and include it in the next block proposal. Therefore, the block will contain only the $QC$ for view $v-1$. As a result, for the block during view $v$, the $QC$ generated from votes in $v-1$ is the proof of $highQC$. It should be noted that in pipelined Fast-HotStuff, a replica commits a block if two-chain (with the first chain needing to be a direct-chain) is formed over the top of it. 
Similarly, if a primary during view $v$ did not receive $n-f$ votes from view $v-1$ (this means the primary during view $v-1$ has failed), then it can only propose a block if it has received $n-f$ $\eta$ (\textsc{NEWVIEW}) messages  from distinct replicas for view $v$. In this case, the primary during view $v$ has to propose a block with aggregated $QC$ or $AggQC$ from $n-f$ replicas.  

The signatures on $\eta$ are aggregated to generate a single aggregated signature in $AggQC$.
Therefore, two types of blocks can be proposed by a primary: a block with $QC$ if the primary is able to build a $QC$ from \textcolor{black}{the} previous view 
or a block with $AggQC$ 
if the previous primary has failed.
It should be noted that to verify $AggQC$ a replica only needs to verify the aggregated signature of $AggQC$ and the $highQC$ in the $AggQC$ as described previously. 
The $highQC$ can simply be found by looping over view numbers of each $QC$ and \textcolor{black}{choosing} the highest one.

The chain structure in pipelined Fast-HotStuff is shown in Figure \ref{fig:Chained-Fast-HotStuff}. Here, we can see that upon receipt of  block $B_{v+2}$ (during the view $v+2$), a two-chain with a direct chain is completed for the block $B_v$. Now a replica can simply execute  the block $B_v$. Since there is no primary failure for  views $v$ through view $v+2$, only $highQC$ is included in the block proposal. Similarly, the primary for the view $v+3$ has failed, therefore, the primary for the view $v+4$, has to propose a block with $AggQC$ (small overhead). Upon receipt of $B_{v+4}$, the $highQC$ can be extracted from $AggQC$. In this case, the $QC$ for the block $B_{v+2}$
has been selected as the $highQC$.

 As it can be seen the algorithm for the pipelined Fast-HotStuff mainly has two components: the part executed by the primary and the part run by replicas. The \textit{primary} either receives votes that it will aggregate into a $QC$ or $\textsc{NEWVIEW}$ messages containing $QC$ from $n-f$ replicas   that the primary will aggregate into $AggQC$ (Algorithm \ref{Algorithm: Chained-Fast-HotStuff} lines 23-25 and 2-9). The primary then builds a proposal in the form of \textsc{PREPARE} message also called a block. \textsc{PREPARE} also contains $QC$ or $AggQC$ depending on if it has received $n-f$ votes or \textsc{NEWVIEW} messages (Algorithm \ref{Algorithm: Chained-Fast-HotStuff} lines 2-11). The primary then proposes the block to replicas ((Algorithm \ref{Algorithm: Chained-Fast-HotStuff} line 10)).
 
 Upon receipt of block $B$ (containing a $QC$) through a proposal, each replica can check the condition $B.viewNumber==B.QC.viewNumber+1$ and accept the proposal if the condition is met. On the other hand, if the block contains a valid $AggQC$, then each replica
   extracts the  $highQC$ from $AggQC$ and checks if the block extends the $highQC.block$. These checks are performed through $PipelinedSafeBlock$ predicate (Algorithm \ref{Algorithm: Utilities for replica i}).
   If the check was successful, each replica sends back a vote to the next primary (Algorithm \ref{Algorithm: Chained-Fast-HotStuff} lines 13-16).
\looseness=-1
   After that, each replica commits the grandparent of the received block if a direct-chain is formed between the received block's parent and its grandparent.
   (Algorithm \ref{Algorithm: Chained-Fast-HotStuff}, lines 17-21). 
   Hence, once the commit condition is met for a block in pipelined Fast-HotStuff, the block is committed after two rounds of communication
   without waiting for the maximum network delay.
   
       \textbf{Resilience against Forking Attack }. Unlike pipelined HotStuff, pipelined Fast-HotStuff is robust to forking  attacks. In Fast-HotStuff the primary has to provide the proof of the latest $QC$ (known to the majority of replicas or a $QC$ more recent than the latest one known by the majority of replicas) included in the block. This proof can be provided in two ways. First, if there is no primary failure, then for the proposed block $B^*$ and the $QC$ it contains ($qc$), we have $B.view=qc.view+1$. Secondly, if there is a primary failure in the previous view then the primary has to include the $AggQC$ ($n-f$ $QC$s). This guarantees at least one of the $QC$s in the $AggQC$ is the latest $QC$ held by the majority of replicas or a $QC$ higher than the latest $QC$ being held by the majority of replicas. The inclusion of proof within \textcolor{black}{the} block prevents the primary from using an old $QC$ to generate forks. If a primary does not provide appropriate proof, its proposal can be rejected. Furthermore, such a proposal can be used as proof to blacklist the primary (by proposing a blacklisting transaction with the proof).

\looseness=-1

 \begin{figure}
    \centering
    \includegraphics[width=9cm,height=2.5cm]{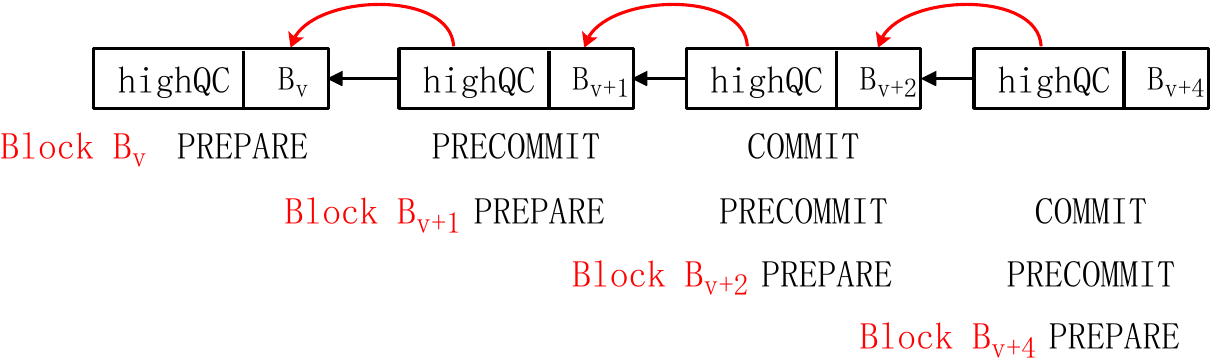}
    \caption{\textbf{Pipeined/Chained Fast-HotStuff where a QC can serve in different phases simultaneously. Note that the primary for view $v+3$ has failed.}}
    \label{fig:Chained-Fast-HotStuff}
\end{figure}

 \begin{algorithm}[ht]
\SetKwFunction{Function}{SendMsgDown}
\SetKwFunction{FuncGetMsg}{MSG}
\SetKwFunction{VoteForMsg}{VoteMSG}
\SetKwFunction{CreatePrepareMsg}{CreatePrepareMsg}
\SetKwFunction{QC}{GenerateQC}
\SetKwFunction{Matchingmsgs}{matchingMsg}
\SetKwFunction{MatchingQCs}{matchingQC}
\SetKwFunction{safeblock}{SafeProposal}
\SetKwFunction{aggregateMsg}{AggregateMsg}
\SetKwProg{Fn}{Func}{:}{}

\small

\SetKwFor{ForEachin}{foreachin}{}{}
\caption{pipelined Fast-HotStuff for block $i$}
\label{Algorithm: Chained-Fast-HotStuff}

\ForEachin{curView $\gets$ 1,2,3,... }{ 
\If{i is primary}{
\If{$n-f$ $\eta$ \textit{ msgs are received}}{
 $aggQC \gets    CreateAggQC(\eta_{Set})$

     $B\gets$ CreatePrepareMsg(\textit{Prepare,$aggQC$,$\perp$ , client’s command})\\
     }
    \If{$n-f$ $v$ \textit{ msgs are received}}{
       $qc \gets GenerateQC(V)$
     CreatePrepareMsg(\textit{Prepare,$\perp$, $qc$, client’s command})
    }
    broadcast $B$
    

}
\If{{i is normal replica}}{
 \text{wait for $B$  from primary(curView)}\\

\If{ PipelinedSafeBlock($B, B.qc, \perp$) $\lor$  PipelinedSafeBlock($B, \perp, B.aggQC $) }  {

Send vote $v$ for prepare message to primary(curView+1)

}

// start commit phase on $B^*$’s grandparent if  a direct chain exists between $B^*$'s parent and grandparent \\
 \If {$(B^{*}
.parent = B'')
\land (B''
.parent = B') \land B''.view=B'.view+1)
$} {
 execute new commands through $B'$\\
 respond to clients
}

}
\If{{i is next primary}}{
\text{wait until $(n - f)$  $v/\eta$ for current view are received}: $\eta_{Set} \gets \eta_{Set} \cup \eta \lor V \gets V \cup v$

}
\textcolor{blue}{$\triangleright$  Finally} \\
\Check{nextView interrupt}{
goto this line if nextView(curView) is called during “wait for” in any phase\\
Send $\eta$ to
primary($curView + 1$)\\
}
}
\end{algorithm}

\section{Proof of Correctness for HotStuff}
\label{Section: Proof-of-correctness}
\subsection{Correctness Proof for Basic Fast-HotStuff}

Correctness of a consensus protocol involves proof of safety and liveness. Safety and liveness are the two important properties of consensus algorithms. In this section, we provide proof \textcolor{black}{of} the safety and liveness properties of Fast-HotStuff. \textcolor{black}{  A proposal of view $v$ being committed by a replica $i$ means that the network has decided on the proposal. The node $i$ can execute the proposal to update its state, which reflects the ordered execution of the predecessor proposals.}
We begin with two standard definitions.
\begin{definition}[Safety]
A protocol is safe if the following statement holds: if at least one correct replica commits a block at the sequence (blockchain height) $s$ in the presence of $f$ Byzantine replicas,  then no other block will ever be committed at the sequence $s$. \label{Definition:Safety}
\end{definition}

\begin{definition}[Liveness]
A protocol is live if it guarantees progress 
in the presence of at most $f$ Byzantine replicas.
\label{Definition:Liveness}
\end{definition}
Next, we introduce several technical lemmas. The first lemma shows that for each view, at most one block can get certified (get $n-f$ votes).

\begin{lemma}\label{lemma: Different Seq}
If any two valid $QC$s, $qc_1$ and $qc_2$ with same type $qc_1.type=qc_2.type$ and conflicting blocks i.e., $qc_1.block=B$ conflicts with $qc_2.block=B'$, then we have $qc_1.viewNumber \neq qc_2.viewNumber$. 
\end{lemma}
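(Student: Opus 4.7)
The plan is a standard quorum-intersection argument by contradiction. Suppose for contradiction that two valid QCs $qc_1$ and $qc_2$ of the same type certify conflicting blocks $B$ and $B'$ in the same view, i.e., $qc_1.viewNumber = qc_2.viewNumber = v$. Since each valid QC aggregates signatures from at least $n-f = 2f+1$ distinct replicas, the two voter sets $S_1$ and $S_2$ satisfy $|S_1 \cap S_2| \ge 2(n-f) - n = n - 2f = f+1$. Because at most $f$ replicas are Byzantine, the intersection contains at least one honest replica, call it $r$.

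The key step is then to argue that $r$ must have cast two votes of the same type in the same view $v$ for two different (conflicting) blocks, which contradicts the voting rule of the protocol. Concretely, in the basic Fast-HotStuff algorithm an honest replica only emits a \textsc{PREPARE} vote after passing the \texttt{BasicSafeProposal} check on the block it receives from the primary of view $v$, and emits a \textsc{PRECOMMIT} vote only after receiving a matching $PrepareQC$ for that same block; in either case the replica proceeds through phases sequentially within a view and never signs two different block hashes with the same type at the same $viewNumber$. Hence $r$ could not have contributed both signatures, contradicting $|S_1 \cap S_2| \ge f+1$ honest.

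The main obstacle is not the counting — that is immediate from $n = 3f+1$ — but pinning down precisely where in Algorithm~\ref{Algorithm: Basic-Fast-HotStuff} an honest replica is prevented from double-voting within a view. I would isolate this as a small sub-claim: an honest replica sends at most one \textsc{PREPARE} vote and at most one \textsc{PRECOMMIT} vote per $curView$, which follows because each phase is guarded by a single ``wait for'' at the replica side and the replica increments $curView$ before entering the next iteration of the outer loop. With that sub-claim in hand, the main contradiction closes immediately, and the lemma follows for both vote types, hence for both $PrepareQC$ and $PrecommitQC$ certificates.
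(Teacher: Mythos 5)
Your proposal is correct and follows essentially the same quorum-intersection-by-contradiction argument as the paper: two quorums of size $n-f=2f+1$ out of $n=3f+1$ must share at least one honest replica, which would have to double-vote with the same type in the same view, contradicting the protocol's one-vote-per-phase-per-view rule. Your added sub-claim pinning down where Algorithm~\ref{Algorithm: Basic-Fast-HotStuff} prevents double-voting is a slightly more explicit justification of a step the paper simply asserts, but the route is the same.
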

\begin{proof}

We can prove this lemma by contradiction. Furthermore, we assume that $qc_1.viewNumber = qc_2.viewNumber$. Now let's consider, $N_1$ is a set of replicas that have voted for block $B$ in $qc_1$($|N_1| \geq 2f+1$). Similarly, $N_2$ is another set of replicas that have voted for block $B'$ and whose votes are included in $qc_2$ ($|N_2| \geq 2f+1$). Since $n = 3f+1$ and $f = \frac{n-1}{3}$,  this means there is at least one correct replica $j$ such that $j \in N_1 \cap N_2$ (which means $j$ has voted for both $qc_1$ and $qc_2$). But a correct replica only votes once for each phase in each view. Therefore, our assumption is false\textcolor{black}{,} and hence, $qc_1.viewNumber \not = qc_2.viewNumber$.
\end{proof}

The second lemma proves that if a single replica has committed a block at the view $v$ then all other replicas will commit the same block at the view $v$.

\begin{lemma}\label{lemma:PrepareQC-Block-extension}
If at least one correct replica $i$ has received $PrecommitQC$ for block $B$ (committed block $B$), then the  $PrepareQC$ for block $B$ will be the $highQC$ for next (child of block $B$) block $B'$.
\end{lemma}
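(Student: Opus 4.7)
The plan is to prove this by a standard quorum intersection argument, leaning on Lemma~\ref{lemma: Different Seq} (at most one $PrepareQC$ per view for conflicting blocks). First I would unpack the hypothesis: if a correct replica holds $PrecommitQC$ for $B$, then that certificate aggregates $n-f = 2f+1$ PRECOMMIT votes, and by Algorithm~\ref{Algorithm: Basic-Fast-HotStuff} a replica only issues a PRECOMMIT vote after it has received (and stored as its local $highQC$) the $PrepareQC$ for $B$. Since at most $f$ of those $2f+1$ voters are Byzantine, at least $f+1$ \emph{correct} replicas permanently hold $PrepareQC(B)$ and will include it (or something more recent) in every NEWVIEW message they subsequently send.

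Next I would examine the primary that eventually produces $B'$. That primary must assemble $AggQC$ from $n-f = 2f+1$ NEWVIEW messages, of which at least $f+1$ originate from correct replicas. Two sets of size $\geq f+1$ drawn from the pool of at most $2f+1$ correct replicas must intersect, so some correct replica $j$ both received $PrepareQC(B)$ and contributed its NEWVIEW to $AggQC$. Consequently $PrepareQC(B)$ (or a QC of strictly higher view) appears inside $AggQC$, and when the primary extracts the highest-view QC as $highQC$, it picks something with view at least $v_B$, where $v_B$ is the view of $B$.

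It remains to rule out QCs with view strictly greater than $v_B$ that do not extend $B$, which would then pin $highQC$ down to be exactly $PrepareQC(B)$. I would argue inductively on view number: any hypothetical $PrepareQC$ formed in a view $v' > v_B$ also arises from a primary that collected $n-f$ NEWVIEWs, so by the same intersection argument its $AggQC$ must contain $PrepareQC(B)$, and the $BasicSafeProposal$ predicate forces the certified block to extend $B$; combined with Lemma~\ref{lemma: Different Seq}, which rules out a conflicting $PrepareQC$ at view $v_B$ itself, this forces $highQC = PrepareQC(B)$. The main obstacle is precisely this last inductive step: Byzantine primaries may have attempted proposals during intermediate failed views, so careful bookkeeping across potentially many failed views between $v_B$ and the view of $B'$ is where I would be most cautious in order to ensure no rogue higher-view conflicting QC slips through.
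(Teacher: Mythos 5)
Your overall strategy is the same as the paper's: a quorum-intersection argument showing that the $AggQC$ assembled by the next primary must contain $PrepareQC(B)$. However, the specific intersection step you use is wrong as stated. You take the $\geq f+1$ correct replicas among the PRECOMMIT voters and the $\geq f+1$ correct replicas among the NEWVIEW senders and claim they must meet because both are ``drawn from the pool of at most $2f+1$ correct replicas.'' The system model only guarantees \emph{at most} $f$ Byzantine replicas, so the number of correct replicas is at least $2f+1$ and can be as large as $n=3f+1$; two sets of size $f+1$ inside a pool of, say, $3f+1$ correct replicas need not intersect. The paper's argument avoids this: it intersects the two \emph{full} quorums $R_1$ (the $2f+1$ replicas whose PRECOMMIT votes form $PrecommitQC$, all of which have seen $PrepareQC(B)$) and $R_2$ (the $2f+1$ NEWVIEW senders) inside the universe of $n=3f+1$ replicas, obtaining $|R_1 \cap R_2| \geq f+1$, and only then uses the bound of $f$ Byzantine replicas to conclude that at least one member of the intersection is correct. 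Your conclusion survives, but you must intersect the quorums first and discount Byzantine members second.

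On your final paragraph: the paper sidesteps the issue of higher-view conflicting QCs by explicitly restricting attention to the immediate next view ($B'$ proposed in view $v+1$, ``for ease of understanding''), in which case no QC of view greater than $v$ can exist yet and $PrepareQC(B)$ is trivially the highest QC in the $AggQC$. Your instinct that later views require an inductive argument is sound (and is essentially what Lemma~\ref{lemma:safety} and the pipelined-case lemmas supply), but note that for a $B'$ proposed several views after $v$ the extracted $highQC$ need not literally equal $PrepareQC(B)$ --- it may be a QC for a descendant of $B$ --- so the statement your induction would establish is ``the $highQC$ certifies $B$ or a block extending $B$,'' not ``$highQC = PrepareQC(B)$.'' As written, your closing claim overshoots what the induction can deliver.
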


\begin{proof}
  The primary begins with a new view $v+1$ as it receives $n-f$ \textsc{NEWVIEW} messages. Here for ease of understanding, we assume $v+1$.  $B'$ could have been proposed during any view $v'>v$.
We show that any combination of 
$n-f$ \textsc{NEWVIEW} messages  have at least  one of those \textsc{NEWVIEW} message received by the primary containing $highQC$ ($PrepareQC$ for block $B$) for block $B$.

We know that in \textcolor{black}{the} previous view $v$, replica $i$ has committed block $B$. This means a set of replicas $R_1$ of size  $|R_1|\geq 2f+1$ have voted for $PrepareQC$ ( which is built from votes for block $B$). Similarly, another set of  replicas $R_2$ of size $|R_2|\geq 2f+1$ have sent their \textsc{NEWVIEW} messages to the primary of view $v+1$ after the end of view $v$ . Since the total number of  replicas is $n=3f+1$ and $f = \frac{n-1}{3}$, therefore, $R_1 \cap R_2 =R $, such that, $|R| \geq f+1$, which means that there is at least one correct replica in $R_2$ that has sent its $PrepareQC$ as $highQC$ in \textsc{NEWVIEW} message to the primary. Therefore, when primary of view $v+1$ proposes $B'$, the $PrepareQC$ for block $B$ will be the $highQC$. In other words, $B'$ will point to $B$ as its parent (through $highQC$). 
\end{proof}
The third lemma proves that if a replica commits a block, then it will not be reverted. 
\begin{lemma} \label{lemma:safety}
A correct replica will not commit two conflicting  blocks.
\end{lemma}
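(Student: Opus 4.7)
The plan is to prove this by contradiction. Suppose for the sake of contradiction that a correct replica commits a block $B$ at view $v$, and that some correct replica commits a block $B'$ at view $v'$ that conflicts with $B$. By Lemma 1, two conflicting blocks cannot be certified in the same view, so we may assume $v < v'$. It then suffices to establish, by induction on the view number $k > v$, the following invariant: every block that becomes certified at view $k$ has $B$ as an ancestor. Since $B'$ must be certified before it can be committed, the invariant applied at $k = v'$ contradicts the assumption that $B'$ and $B$ conflict.

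The base case $k = v+1$ is essentially Lemma 2, which shows that the $PrepareQC$ for $B$ appears as the $highQC$ inside the $AggQC$ assembled at view $v+1$, so the $BasicSafeProposal$ predicate forces the block proposed at view $v+1$ (if any is certified) to extend $B$. For the inductive step, I would reuse the quorum-intersection argument in a uniform way. Because $B$ was committed, a set $R_1$ of at least $f+1$ correct replicas voted for $PrepareQC(B)$ at view $v$. The primary of view $k$ must attach an $AggQC$ of $n-f = 2f+1$ $NEWVIEW$ entries from distinct replicas, so at least $f+1$ of these entries are contributed by correct replicas, giving nonempty intersection with $R_1$. A correct replica never downgrades its $highQC$ (its view number is monotonically nondecreasing), and by the inductive hypothesis every certified block at views in $(v, k)$ extends $B$; hence the $highQC$ held by any correct replica in the intersection is for $B$ itself or a descendant of $B$. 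Therefore the unique $QC$ of maximum view number inside the primary's $AggQC$ points to a block extending $B$, and $BasicSafeProposal$ forces the newly proposed (and hence any newly certified) block at view $k$ to extend $B$.

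The main obstacle is the monotonicity invariant for $highQC$ at correct replicas together with the unforgeability guarantee of $AggQC$: I need to argue that a Byzantine primary cannot silently omit or lower the contribution of a correct replica inside the $AggQC$, because the aggregated signature is verified against the claimed $n-f$ $(curView, qc.block, i)$ tuples in $BasicSafeProposal$, and that a correct replica never replaces a higher $PrepareQC$ with a lower one when emitting its $NEWVIEW$. Once both of these facts are nailed down, the induction closes cleanly and the contradiction with Lemma 1 (applied at the level of ancestry rather than equality) yields safety.
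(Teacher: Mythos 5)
Your proof is correct and follows essentially the same route as the paper's: quorum intersection between the $f{+}1$ correct replicas holding $PrepareQC(B)$ and the $n{-}f$ \textsc{NEWVIEW} entries in any later $AggQC$, combined with the $BasicSafeProposal$ check, forces every subsequently certified block to extend $B$. Your explicit induction on the view number (with the $highQC$-monotonicity invariant at correct replicas) is in fact more careful than the paper's own argument, which invokes Lemma~\ref{lemma:PrepareQC-Block-extension} only for the immediate successor view and leaves the extension to arbitrary later views implicit.
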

\begin{proof}
From lemma \ref{lemma: Different Seq} we know that each correct replica votes only once for each view and therefore view number for conflicting blocks $B$ and $B'$ will not be the same. Therefore, we can assume that $B.curView < B'.curView$.
Based on lemma
\ref{lemma:PrepareQC-Block-extension} we know that if at least one correct replica has received  $PrecommitQC$ for block $B$   (have committed $B$), then the $PrepareQC$ for block $B$ will be the $highQC$ for the next block $B'$ (child of block $B$).  
Therefore, any combination of $n-f$ $PrepareQC$s in $AggQC$ for $B'$ will include at least one $highQC$ such that $highQC.block=B$ or $highQC.block.view=B.view$. But
for $B'$ to be conflicting with $B$ it has to point to the parent of $B$ at least. 
 Consequently, first\textcolor{black}{,} it is not possible for a primary to build a valid \textsc{PREPARE} message $B'$ in which $highQC.block.view < B.view$. Secondly, if a primary tries to propose a block with invalid $AggQC$, it will be rejected
 by the replica based on Algorithm \ref{Algorithm: Basic-Fast-HotStuff} line 12. 
\end{proof}

Lemmas \ref{lemma: Different Seq}, \ref{lemma:PrepareQC-Block-extension} and \ref{lemma:safety} provide a safety \textcolor{black}{(Definition \ref{Definition:Safety})} proof for basic Fast-HotStuff consensus protocol.
 To ensure liveness  \textcolor{black}{(Definition \ref{Definition:Liveness})}, Fast-HotStuff has to make sure in each view a replica is selected as a primary \textcolor{black}{ deterministically (eventually rotating through all replicas)} and the view number is incremented. Moreover, we should also show that the protocol will eventually add a block to the chain/tree of blocks (a block to the blockchain) or will result in view change in the case of failure.

\textcolor{black}{The NEXTVIEW maintains the timeout value (interval), and the timer is set upon entering a view}. If a replica times out (without reaching a decision), it can employ exponential-backoff used in PBFT \cite{Castro:1999:PBF:296806.296824} to double its timeout value and \textcolor{black}{calls the NEXTVIEW (to advance the view).} 
This guarantees that eventually after GST there will be a period of synchrony when timeout values from all correct replicas overlap \textcolor{black}{for a long enough interval ($T_f$),} the primary will be correct and at least $n-f$ honest/correct replicas will be in the same view. As a result, it can be shown that this  period is enough to reach a decision during consensus.
Below we provide liveness \textcolor{black}{(Definition \ref{Definition:Liveness})} proof for our Fast-HotStuff protocol. 
\begin{lemma}
\label{Lemma:Liveness correct primary}
After GST, there is a time $T_f$, when there is a correct primary and all correct/honest replicas are in the same view. As a result,  a decision is reached during  $T_f$.
\end{lemma}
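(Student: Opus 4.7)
The plan is to establish liveness in two stages: first show that, after GST, the timeout mechanism drives all honest replicas into a common view with an honest primary, and then exhibit that in such a window every phase of the basic Fast-HotStuff protocol terminates with a commit.

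First I would invoke the partial synchrony assumption together with the exponential-backoff timer rule borrowed from PBFT. After GST all messages are delivered within $\Delta$, and every honest replica that fails to make progress doubles its local timeout. A standard argument then shows that the local timers of the $2f+1$ honest replicas cannot grow unboundedly apart: once an honest replica reaches a timeout that exceeds the wall-clock cost of a full round (a constant times $\Delta$ plus the verification cost of one $AggQC$ and its $highQC$), slower replicas either catch up by processing the $\eta$ and $QC$ messages they receive, or their own timers grow to match. Consequently there is an interval $T_f$ during which every honest replica holds the same view number $v$ and still has enough slack on its timer to complete a Prepare--Precommit--Commit cycle. Because the view-to-primary map is deterministic and at most $f$ replicas are Byzantine, by possibly extending $T_f$ across at most $f+1$ consecutive aligned views we are guaranteed to reach one whose primary is honest; I can therefore assume without loss of generality that within $T_f$ the primary of the current view $v$ is honest.

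With an honest primary in charge of view $v$ and all honest replicas already in view $v$, the remaining argument is mechanical. The $2f+1$ honest replicas dispatch their \textsc{NewView} messages $\eta$ carrying their local $PrepareQC$s within $\Delta$; the primary assembles a valid $AggQC$ and broadcasts the Prepare message. By Lemma~\ref{lemma:PrepareQC-Block-extension}, the $highQC$ extracted from this $AggQC$ is at least as recent as the one held by any honest replica, so $BasicSafeProposal$ succeeds at every honest node and each replies with a Prepare vote. The primary aggregates $n-f$ such votes into $PrepareQC$ and broadcasts the Precommit message, causing each honest replica to commit $highQC.block$ (safe by Lemma~\ref{lemma:safety}) and to send a Precommit vote. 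Finally $PrecommitQC$ is assembled and broadcast, and every honest replica executes the newly proposed block. Each hop is a single $\Delta$-bounded round, so the entire cycle fits inside the slack of $T_f$ and a decision is reached.

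The hard part will be the first step, namely formalizing that the exponential-backoff rule aligns the views of honest replicas after GST in the presence of an adversary that is actively trying to desynchronize them. The argument is standard but subtle: one must rule out the pathological case in which Byzantine replicas keep feeding just enough partial information (stale $QC$s, malformed blocks, or late proposals) to keep some honest replicas oscillating between views while others advance. The usual remedy, which I would follow, is to observe that a valid $QC$ or $\eta$ at view $v' \ge v$ only monotonically increases an honest replica's view, while the doubling of timers ensures that the gap between the slowest and fastest honest replica is bounded and eventually squeezed to zero; combined with the $\Delta$-bound on message delivery after GST, this produces the required common window $T_f$.
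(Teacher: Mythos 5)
Your proof is correct and takes essentially the same route as the paper's: with all honest replicas aligned in a view led by an honest primary, Lemma~\ref{lemma:PrepareQC-Block-extension} guarantees the $AggQC$ carries the latest $PrepareQC$, so the \textsc{Prepare}--\textsc{Precommit}--\textsc{Commit} cycle completes within the bounded window. The only difference is that you spell out the exponential-backoff view-synchronization argument and the selection of an honest primary within $f+1$ aligned views, steps the paper treats informally in the paragraph preceding the lemma rather than inside the proof itself.
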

\begin{proof}
 Based on Lemma \ref{lemma:PrepareQC-Block-extension}, at the beginning of a view, the primary will have the latest $PrepareQC$ or $highQC$ from $n-f$ replicas. As per assumption, \textcolor{black}{after GST the period that $T_f$ bound on message delivery holds is sufficiently large. Therefore,} when all correct replicas are in the same view (\textcolor{black}{due to exponential-backoff as explained}),  the correct primary will propose a \textsc{PREPARE} message with $AggQC$ containing the  latest $PrepareQC$, which is extracted by each replica. Since all replicas are in the same view until bounded $T_f$ time, therefore all replicas can successfully complete \textsc{PREPARE}, \textsc{PRECOMMIT}\chen{,} and \textsc{COMMIT} phase. 
\end{proof}

\subsection{Correctness Proof for Pipelined Fast-HotStuff}
We can establish the safety \textcolor{black}{(Definition \ref{Definition:Safety})} and liveness \textcolor{black}{(Definition \ref{Definition:Liveness})} of pipelined Fast-HotStuff in a way similar to what we have proven for the basic Fast-HotStuff. For safety, we want to prove that if a block is committed by a replica, then it will never be revoked. Moreover, if a replica commits a block then all other replicas will eventually commit the same block at the same block sequence/height.

 \begin{lemma} \label{Lemma:Pipelined-FHS}
 If B and B' are two conflicting blocks, then only one of them will be committed by a correct replica.
 \end{lemma}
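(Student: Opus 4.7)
The plan is to mirror the structure of the basic Fast-HotStuff safety proof (Lemmas \ref{lemma: Different Seq}--\ref{lemma:safety}), but with an extra induction that accounts for the pipelined commit rule, which certifies $B'$ only after seeing a block $B^{*}$ whose parent $B''$ satisfies $B''.view = B'.view + 1$ and $B''.parent = B'$. First I would invoke Lemma \ref{lemma: Different Seq} to rule out the trivial case: two conflicting blocks cannot both be certified at the same view, so without loss of generality assume $B.view < B'.view$ and that $B$ is the one that has been committed.

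Next I would establish the central invariant: \emph{every certified block at a view $\geq B.view$ is a descendant of $B$}. The proof is by induction on the view number $v''\geq B.view$. The base case $v''=B.view$ is again Lemma \ref{lemma: Different Seq}. For the inductive step, I would split on the two block formats produced by Algorithm \ref{Algorithm: Chained-Fast-HotStuff}. In the contiguous (happy-path) case, the proposal carries a single $QC$ with $B'''.view = B'''.QC.view + 1$, so $B'''.parent$ is certified at a smaller view in the range covered by the induction hypothesis and the claim is immediate. In the $AggQC$ case I would use the quorum-intersection argument: the direct child $B''$ of $B$ was certified (its $QC$ sits inside $B^{*}$), so $2f+1$ replicas voted for $B''$ and therefore have observed $QC(B)$; another set of $2f+1$ replicas contributed the $\eta$ messages for the $AggQC$ in $B'''$. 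Their intersection contains at least one honest replica $j$, and $j$'s $highQC$ at the time it sent \textsc{NewView} has view $\geq B.view$. By Lemma \ref{lemma: Different Seq} and the induction hypothesis, any such $QC$ points to a descendant of $B$; since the primary is forced to extend the $highQC$ inside the $AggQC$ (checked by $PipelinedSafeBlock$), $B'''$ is also a descendant of $B$.

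From the invariant, $B'$ — being at a view $\geq B.view$ and being certified (a prerequisite for it to be committed) — must extend $B$, which contradicts the assumption that $B$ and $B'$ are conflicting. Hence at most one of $B, B'$ can be committed.

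The main obstacle is the $AggQC$ case of the induction step, because the voter for $B''$ and the contributor of the $highQC$ in the $AggQC$ are not a priori the same replica and the replica's $highQC$ may have advanced past $QC(B)$ between those two events. The argument must therefore route through the invariant itself: the $highQC$ held by the honest intersecting replica is some $QC$ at a view in $[B.view, v''-1]$, and only the inductive hypothesis guarantees that every such $QC$ already points to a descendant of $B$. A second subtle point is that the direct-chain requirement $B''.view = B'.view+1$ is needed precisely to seed this induction at the immediate successor view, so I would flag this explicitly when applying the intersection argument.
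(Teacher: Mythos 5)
Your proof is correct, but it takes a genuinely different route from the paper's. The paper's own argument for Lemma \ref{Lemma:Pipelined-FHS} is short and direct: it fixes the single configuration in which $B'$'s $QC$ points to $B$'s parent $B^{*}$ and checks that both branches of $PipelinedSafeBlock$ then fail (the direct-chain test because $B'.viewNumber = B'.QC.viewNumber+k$ with $k>1$, and the $AggQC$ test because $B'$ would not extend the extracted $highQC.block$). The harder configurations --- a competing certificate formed at a view strictly between $qc.view$ and $qc'.view$, or above $qc'.view$ --- are not treated in this lemma at all; the paper defers them to the subsequent height-$h$ lemma, which disposes of them by a five-way case analysis on $qc^{*}.view$ (its Case 3 is exactly where the direct-chain condition is invoked). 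You instead prove one inductive invariant --- every block certified at a view $\geq B.view$ extends $B$ --- seeded at view $B.view+1$ by the direct-chain commit condition and Lemma \ref{lemma: Different Seq}, and propagated by quorum intersection together with the monotonicity of each honest replica's $highQC$; committing a conflicting $B'$ then contradicts the invariant outright. This is the standard locked-certificate induction, and as a standalone proof of the statement it is more complete than the paper's, since it covers the intermediate-view and higher-view fork shapes in the same pass; the price is the care you correctly flag in the $AggQC$ step, namely that the intersecting replica's $highQC$ may have advanced past $QC(B)$ and only the induction hypothesis keeps it on a descendant of $B$. The paper's decomposition buys two shorter proofs but leaves Lemma \ref{Lemma:Pipelined-FHS} itself covering only one fork shape.
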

 
 \begin{proof}
 From lemma \ref{lemma: Different Seq}, we know that each correct replica votes only once for each view and therefore view number for conflicting blocks $B.curView < B'.curView$ and we assume that block $B$ is already committed.
 Now,  a replica $r$ receives a \textsc{PREPARE} message in the form of block $B'$.  Since $B'$ is a conflicting block to $B$ therefore, the $highQC$ in the block must point to an ancestor of block $B$. In this case we consider $B^{*}$ as the parent of $B$ ($B.parent=B^{*}$) and $B'.QC.block=B^{*}$. 
 In case of $AggQC$,
 since $B'$ extends from $B^{*}$ which is parent of $B$, therefore first condition in $PipelinedSafeBlock$ predicate fails because $highQC$ in $B'$ ($B'.QC$) extends the parent of $B$ ($B^*$) but not $B$ (invalid $AggQC$). \textcolor{black}{Any valid $AggQC$ will have at least one $QC$ for block $B$ or any other block that extends from block $B$ (Lemma \ref{lemma:PrepareQC-Block-extension}).}
 Similarly,  second condition in $PipelinedSafeBlock$ predicate also fails because $B'.viewNumber=B'.QC.viewNumber+k$, where $k>1$. \looseness=-1
 \end{proof}

\begin{figure}
    \centering
    \includegraphics[width=9cm,height=2.5cm]{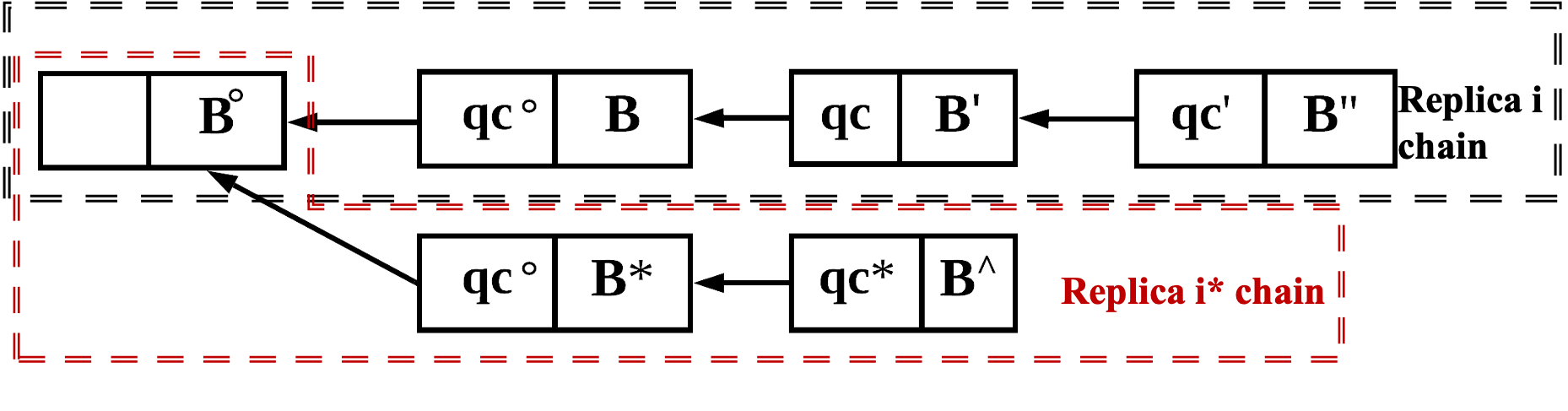}
    \caption{\textbf{Safety violation scenario}.}
    \label{fig:Safety-violation}
\end{figure}

\begin{lemma}
If a block B is committed by a \textcolor{black}{correct} replica $i$,  then no other \textcolor{black}{correct} replica $i^*$ in the network will commit another block $B^*$ \textcolor{black}{ that conflicts with the block $B$}.
 \end{lemma}

  \begin{proof}
 For the sake of contradiction let's assume that it is possible that if a block $B$ is committed only by a single correct replica $i$ then at least one another correct replica $i^*$ can commit \textcolor{black}{another block $B^*$ which conflicts with the block $B$.}
 The chain of each replica $i$ and $i^*$ is given in the Figure \ref{fig:Safety-violation}. Now we analyze different cases \textcolor{black}{that} arise from our assumption.

  Case 1: When $qc^*.view < qc.view$. In this case, since  $qc^*.view < qc.view$ therefore,  $qc^*.view < highQC.view$.  As a result, $qc^*$ will not be selected as $highQC$ by any correct replica  (based on Lemma \ref{lemma:PrepareQC-Block-extension}).
  
  Case 2: When $qc^*.view == qc.view$.
    Two $QC$s pointing to two different blocks cannot be the same.
 
 Case 3: When $qc'.view > qc*.view > qc.view$. In the third case,  there is a possibility that several replicas hold a $QC$ $qc^*$ such that $qc^*.view > qc.view$. $qc^*$ has not been propagated to the majority of correct replicas due to the network partition. 
  The $qc^*.block=B^*$ is conflicting block to block $B$. After the view change the new primary $p$ (that is aware of $qc^*$) includes $qc^*$ in the $AggQC$.  If $qc^*$ is accepted, then other replicas may prefer replica $i^*$'s chain (colored in red in Figure \ref{fig:Safety-violation}). As a result, replicas may
  commit the block $B^*$ which is conflicting to block $B$. But this is not possible  due to the two-direct chain requirements to commit a block  ($qc.view+1=qc'.view$) in Algorithm \ref{Algorithm: Chained-Fast-HotStuff} lines 18-20. Therefore, there can be no such $qc^*$, with a view $qc'.view > qc^*.view > qc.view$ when replica $i$ has committed the block $B$. 
  
  Case 4: When $qc^*.view == qc'.view$.
      Two $QC$s pointing to two different blocks cannot be the same.
      
 Case 5: When $qc^*.view > qc'.view$ 
 We know that $qc'$ has been built from $n-f$ replica votes that have seen the $qc$. Therefore, the $qc$ is the $highQC$ for $n-f$ replicas of which at least $f+1$ are correct. Hence, when the primary collect $n-f$ $\eta$ messages (\textsc{NEWVIEW}) during view change at least one of  $\eta$ will contain $qc$ or a $QC$ that extends the $qc$ (based on Lemma \ref{lemma:PrepareQC-Block-extension}). Therefore, $qc^*$ that does not extend $qc$ will not be formed.
  Hence, our assumption is proved to be false. Therefore, if a block is committed by \textcolor{black}{a correct replica $i$,  then no other conflicting block can be committed by any other correct replica.}

 \end{proof}
 
 This lemma confirms that if a single replica commits a block $B$, then no conflicting block to $B$ will be committed by another replica. This lemma is developed to address a subtle safety violation scenario due to the network partition that was reported in Gemini \cite{banogemini} (that simulates Byzantine scenarios at scale) by Facebook's Novi group. 
 We also successfully verified the safety of  Fast-HotStuff using the Twins simulator for Fast-HotStuff developed by \textcolor{black}{Novi}  team\footnote{https://github.com/asonnino/twins-simulator}.

\begin{lemma}
\label{Lemma:Pipelined liveness correct primary}
After GST, there is a time $T_f$, when there are at least two consecutive correct primaries and all correct/honest replicas are in the same view.  As a result,  a decision will eventually reach.
\end{lemma}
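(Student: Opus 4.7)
My plan is to mirror the structure of the liveness argument for basic Fast-HotStuff (Lemma~\ref{Lemma:Liveness correct primary}) but adapted to the stricter commit rule of the pipelined variant, which requires a two-direct chain rather than a single two-chain. First I would argue that synchronization eventually happens: after GST, each correct replica employs exponential backoff on its view timer (inherited from PBFT), so the timeout values of all correct replicas eventually grow large enough that their waiting windows overlap for a bounded interval $T_f$. During this overlap, any \textsc{NEWVIEW} message from one honest replica reaches every other honest replica within $\Delta$, so if one honest replica advances to view $v$, the remaining $n-f$ honest replicas advance to $v$ as well before $T_f$ elapses.

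Next, I would invoke a counting argument on primary rotation: since at most $f$ of the $n=3f+1$ replicas are Byzantine, any window of $f+2$ consecutive views under round-robin rotation must contain at least two consecutive honest primaries. Combined with the synchronization argument above, we can pick $T_f$ long enough (by further doubling the timeout if needed) to cover such a window in which honest primaries $p_v$ and $p_{v+1}$ both act while every correct replica is in the corresponding view on time.

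The core of the argument is then to show that once these conditions hold, a commit follows. I would proceed in the following steps. (i) Primary $p_v$ collects $n-f$ messages from distinct replicas (either votes from view $v-1$, yielding a $QC$, or \textsc{NEWVIEW} messages, yielding an $AggQC$), because synchrony guarantees delivery within $T_f$. (ii) By Lemma~\ref{lemma:PrepareQC-Block-extension} (extended to the pipelined setting), the proof contained in $B_v$ carries the true $highQC$, so $B_v$ extends the block pointed to by the majority-held $highQC$. (iii) Every honest replica therefore passes the \texttt{PipelinedSafeBlock} predicate and votes for $B_v$ toward $p_{v+1}$. (iv) Because $p_{v+1}$ is also honest and within the synchronized window, it receives $n-f$ votes, builds the $QC$ for $B_v$, and proposes $B_{v+1}$ with $B_{v+1}.\mathit{parent}=B_v$ and $B_{v+1}.\mathit{view}=B_v.\mathit{view}+1$; this is a one-direct chain extension. (v) All honest replicas vote for $B_{v+1}$ as well, so the primary of view $v+2$ (honest or not) can build a $QC$ on $B_{v+1}$ and propose $B_{v+2}$. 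Once any honest replica receives a $B^{*}$ whose parent is $B_{v+1}$ and whose grandparent is $B_v$, the commit condition in Algorithm~\ref{Algorithm: Chained-Fast-HotStuff} (lines 18--20) fires on $B_v$, and progress is made.

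The main obstacle is step (v): the two-direct-chain commit rule means two consecutive honest primaries are necessary but not by themselves sufficient, since an additional extending block is required to trigger the commit. I would handle this by either (a) choosing $T_f$ large enough to cover three consecutive synchronized views (the third view's primary need not be honest, only alive enough to relay $B_{v+1}$'s $QC$), or (b) noting that even if the view-$v+2$ primary is Byzantine and silent, correct replicas timeout and move to view $v+3$ carrying $B_{v+1}$'s $QC$ as their $highQC$; the next honest primary then inevitably proposes a block extending $B_{v+1}$, at which point the commit predicate on $B_v$ is satisfied. Care is needed here to show that a Byzantine view-$v+2$ primary cannot derail the chain by hiding the $QC$ on $B_{v+1}$, which follows from the $AggQC$ mechanism: any subsequent honest primary will necessarily include (through its $AggQC$) a $QC$ at least as recent as $B_{v+1}$'s, preserving the direct chain property established in views $v$ and $v+1$.
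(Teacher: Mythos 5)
Your overall route matches the paper's: exponential backoff yields a synchronized interval $T_f$ after GST, a counting argument on the rotation schedule yields consecutive honest primaries, the second honest primary forms a one-direct chain on top of the first one's block, and a further extending block triggers the two-direct-chain commit rule. You are in fact more explicit than the paper's (very terse) proof about the key obstacle, namely that two consecutive honest primaries alone do not fire the commit predicate. However, your resolution of that obstacle contains a genuine gap. In option (b) you claim that if the view-$(v{+}2)$ primary is Byzantine and silent, correct replicas time out "carrying $B_{v+1}$'s $QC$ as their $highQC$," so the next honest primary must extend $B_{v+1}$. This is false in the protocol as specified: the votes on $B_{v+1}$ are sent only to the primary of view $v{+}2$, and the $QC$ on $B_{v+1}$ exists only if that primary assembles it. If it withholds it, no honest replica ever holds that $QC$; their $highQC$ remains the $QC$ on $B_v$ (the one embedded in $B_{v+1}$), so the next honest primary's $AggQC$ points to $B_v$, its proposal orphans $B_{v+1}$, and the one-direct chain you built in views $v$ and $v{+}1$ never gets the third block it needs. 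Option (a) has the same weakness: a Byzantine view-$(v{+}2)$ primary is under no obligation to be "alive enough to relay" anything.

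The repair is either of two things the paper (implicitly, via its appendix) relies on. First, you can strengthen the counting argument: under round-robin with $n=3f+1$ and $f$ Byzantine replicas, the $2f+1$ honest primaries are split into at most $f$ runs, so by pigeonhole some run contains at least $\lceil (2f+1)/f\rceil \ge 3$ \emph{consecutive} honest primaries; three consecutive honest primaries inside $T_f$ suffice, since the third one honestly assembles the $QC$ on $B_{v+1}$ and proposes the extending block. (Under random selection the analogous event happens eventually with probability one.) Second, even when a particular one-direct chain is sabotaged as above, liveness is not lost because committing any later block commits all of its ancestors, so $B_v$ is eventually committed once some later pair-plus-one of honest views succeeds; but then the object that "reaches a decision" is a later block's two-direct chain, not the one your argument constructs. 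As written, your step (v) rests on a $QC$ that may never exist, so the proof does not go through without one of these fixes.
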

\begin{proof}
 As $T_f$ period is sufficiently large after the GST, while two correct primaries are selected consecutively and correct replicas are in the same view, a one-direct chain will be formed between blocks proposed by the correct primaries. Since two-chain (with the first chain as one-direct chain) are required for a block to get committed, the third block will eventually be added by another correct primary. 
 For brevity, more details about pipelined Fast-HotStuff liveness are presented in the Appendix.
\end{proof}

\begin{figure}[ht!]
\centering      
\subfigure[Latency for 1MB Block Size]{      
\begin{minipage}{4cm}
\centering                                          \includegraphics[width=1.8in]{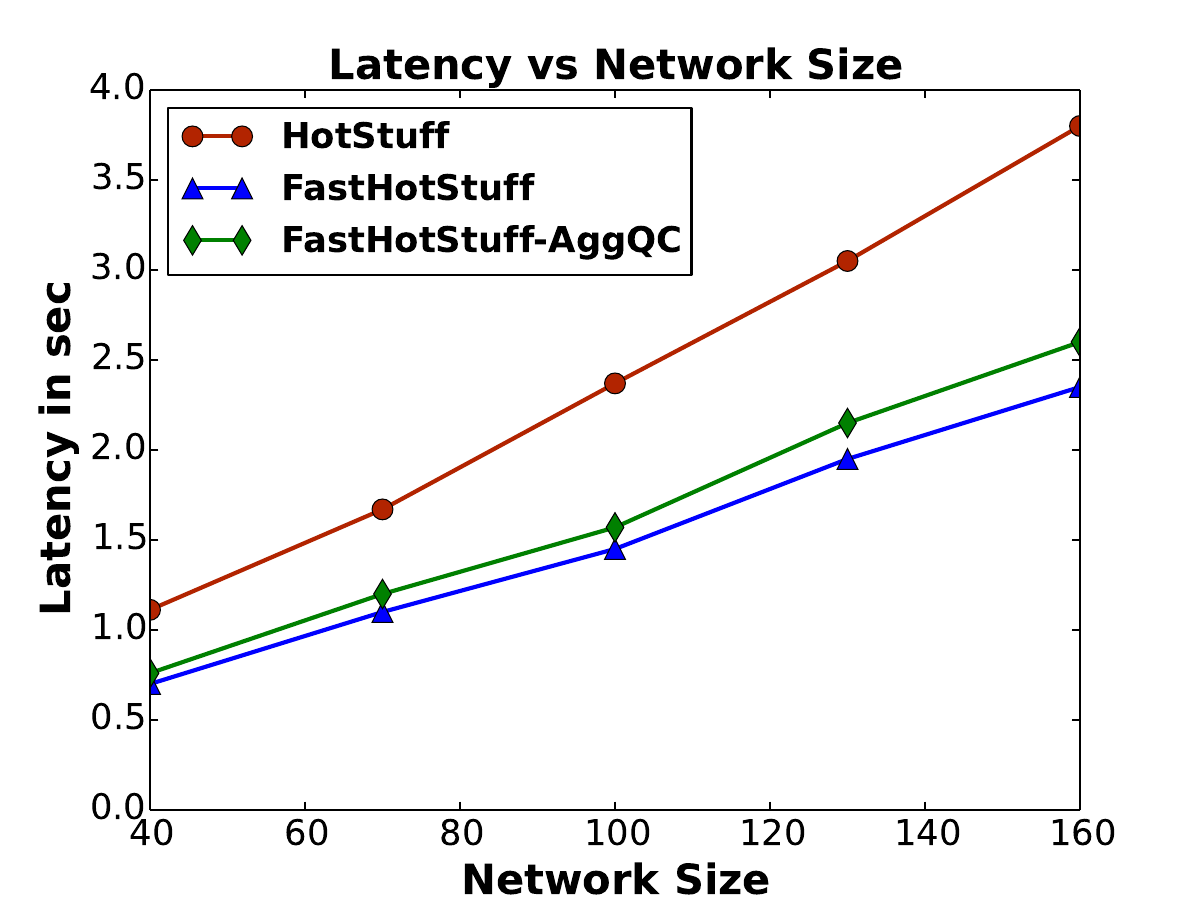}  
\label{fig:FHS-Latency-1MB}   
\end{minipage}
}\hfill
\subfigure[Latency for 2MB Block Size]{    
\begin{minipage}{4cm}
\centering                                              \includegraphics[width=1.8in]{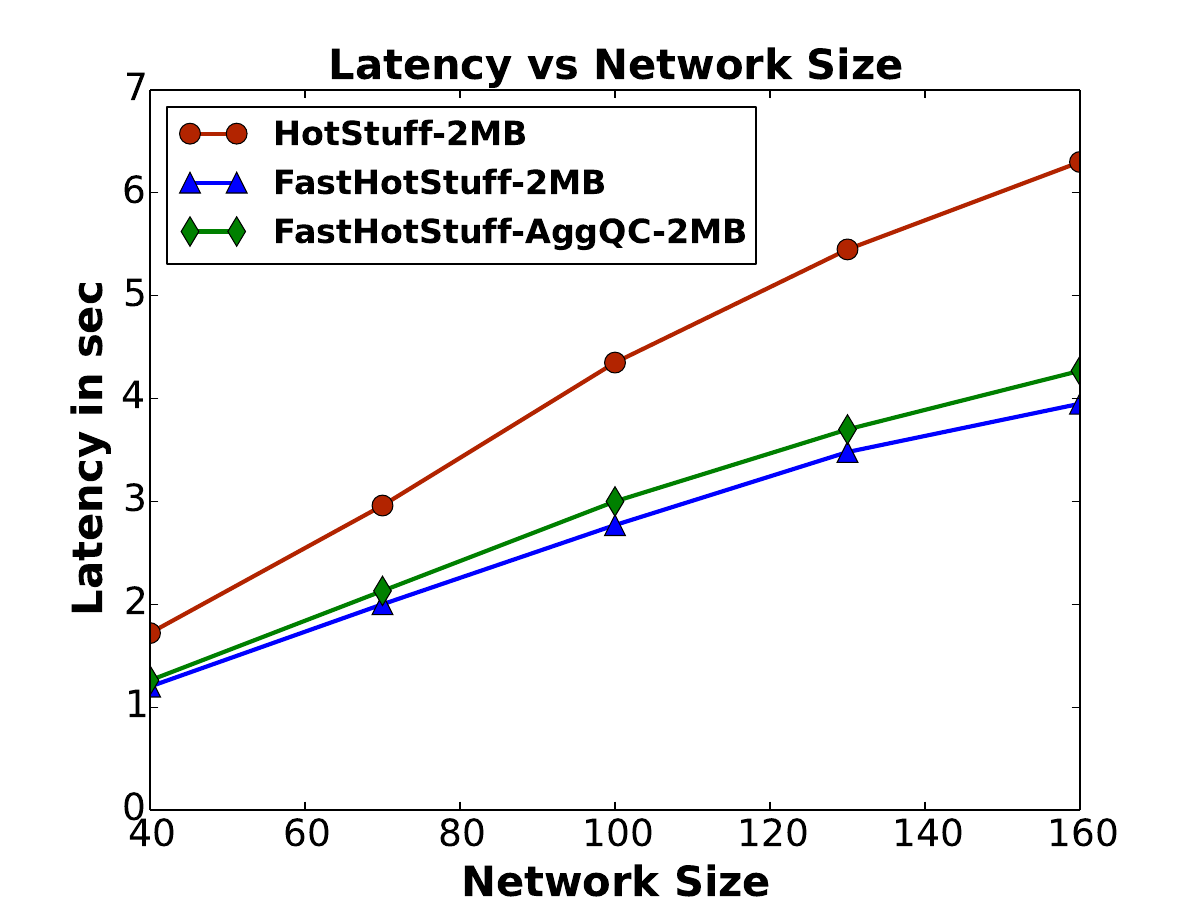}   
\label{fig:FHS-Latency-2MB} 
\end{minipage}
}\hfill
\subfigure[The Latency under the Forking Attack for 1MB Block Size.]{    
\begin{minipage}{4cm}
\centering                                                          \includegraphics[width=1.8in]{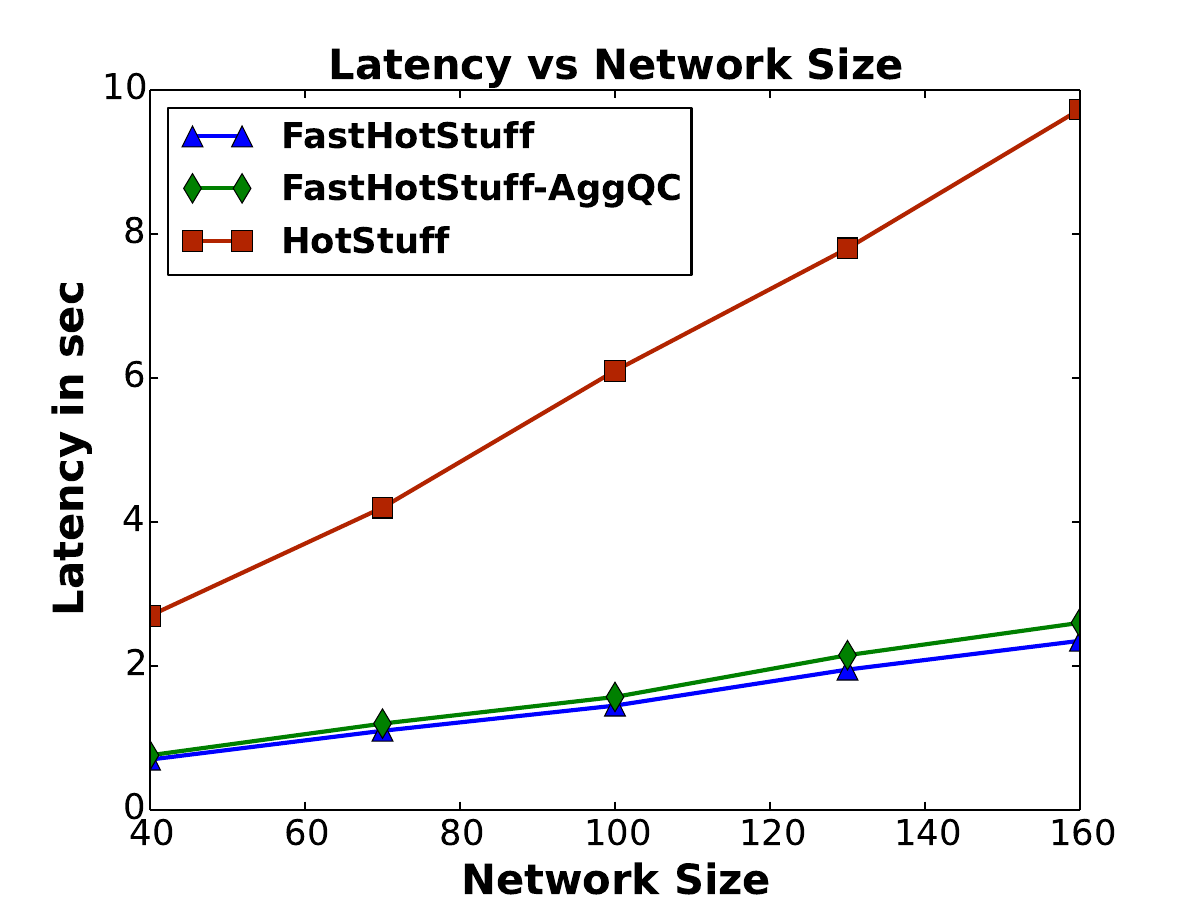}   
\label{fig:FHS-Latency-1MB-UA}   
\end{minipage}
}
\subfigure[The Latency under the Forking Attack for 2MB Block Size.]{    
\begin{minipage}{4cm}
\centering                                                          \includegraphics[width=1.8in]{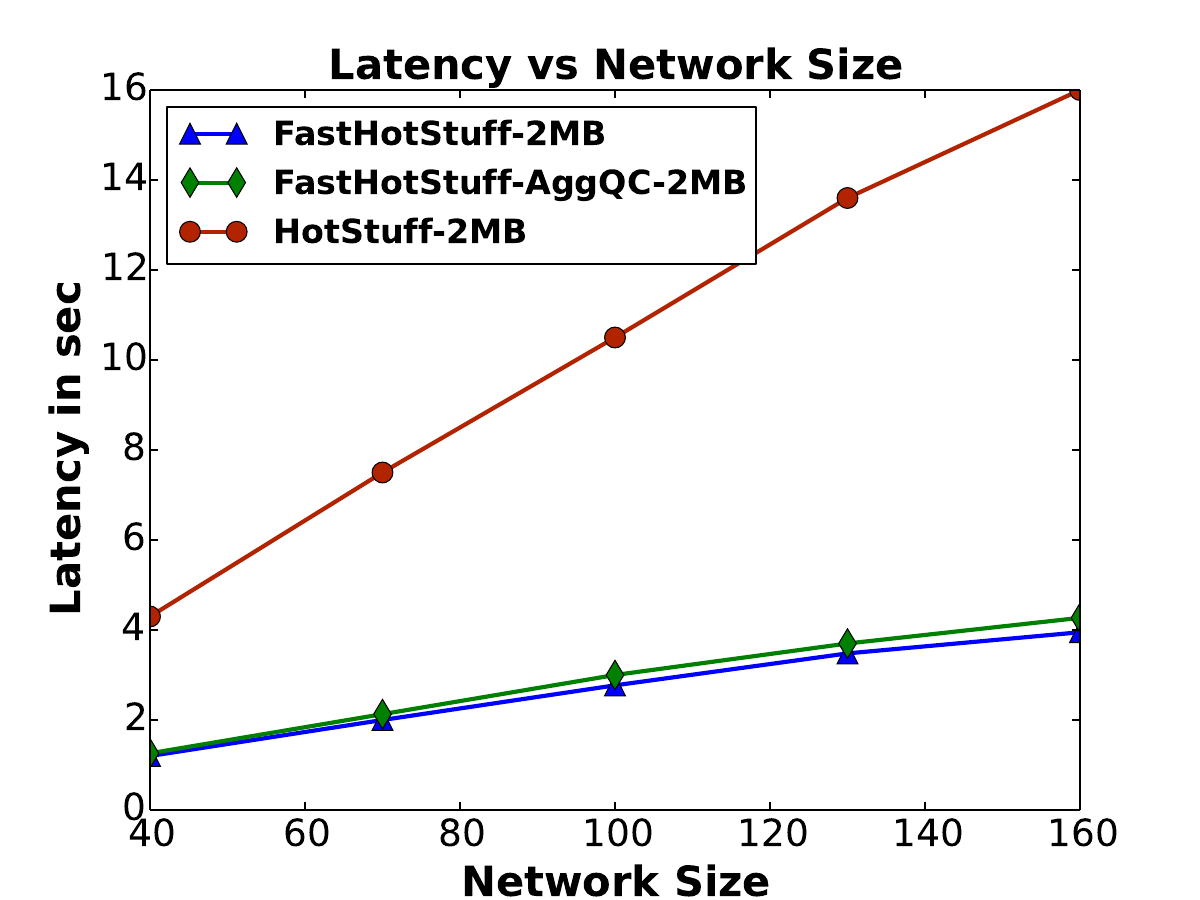}   
\label{fig:FHS-Latency-2MB-UA}   
\end{minipage}
}

\caption{The latency comparison of HotStuff and Fast-HotStuff in LAN environment.}  
\label{fig:LatencyQualityQC}                                 
                   
\end{figure}

\begin{figure}[t]
\centering   
\subfigure[Throughput for 1MB Block Size]{      
\begin{minipage}{4cm}
\centering                                          \includegraphics[width=1.8in]{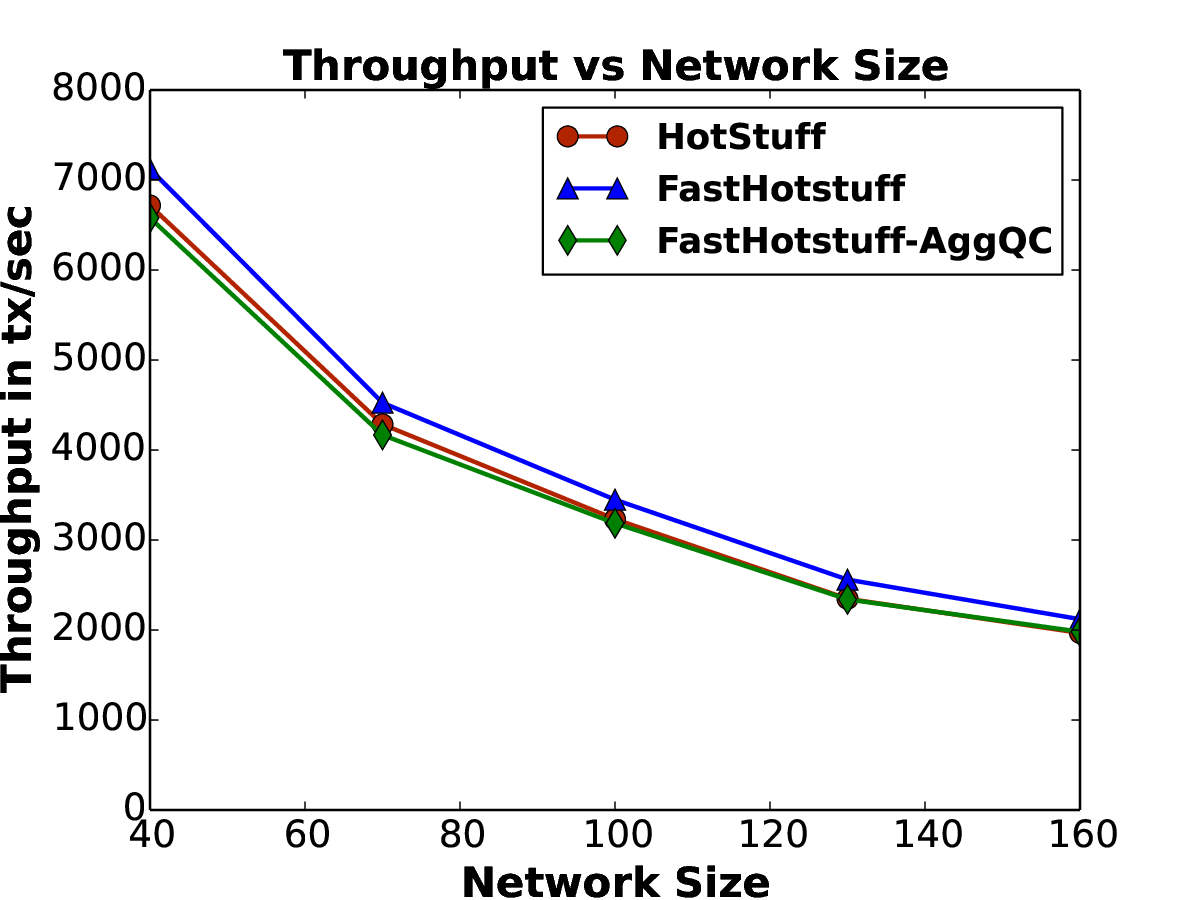}   
\label{fig:FHS-Throughput-1MB}   
\end{minipage}
}\hfill
\subfigure[Throughput for 2MB Block Size]{    
\begin{minipage}{4cm}
\centering                                              \includegraphics[width=1.8in]{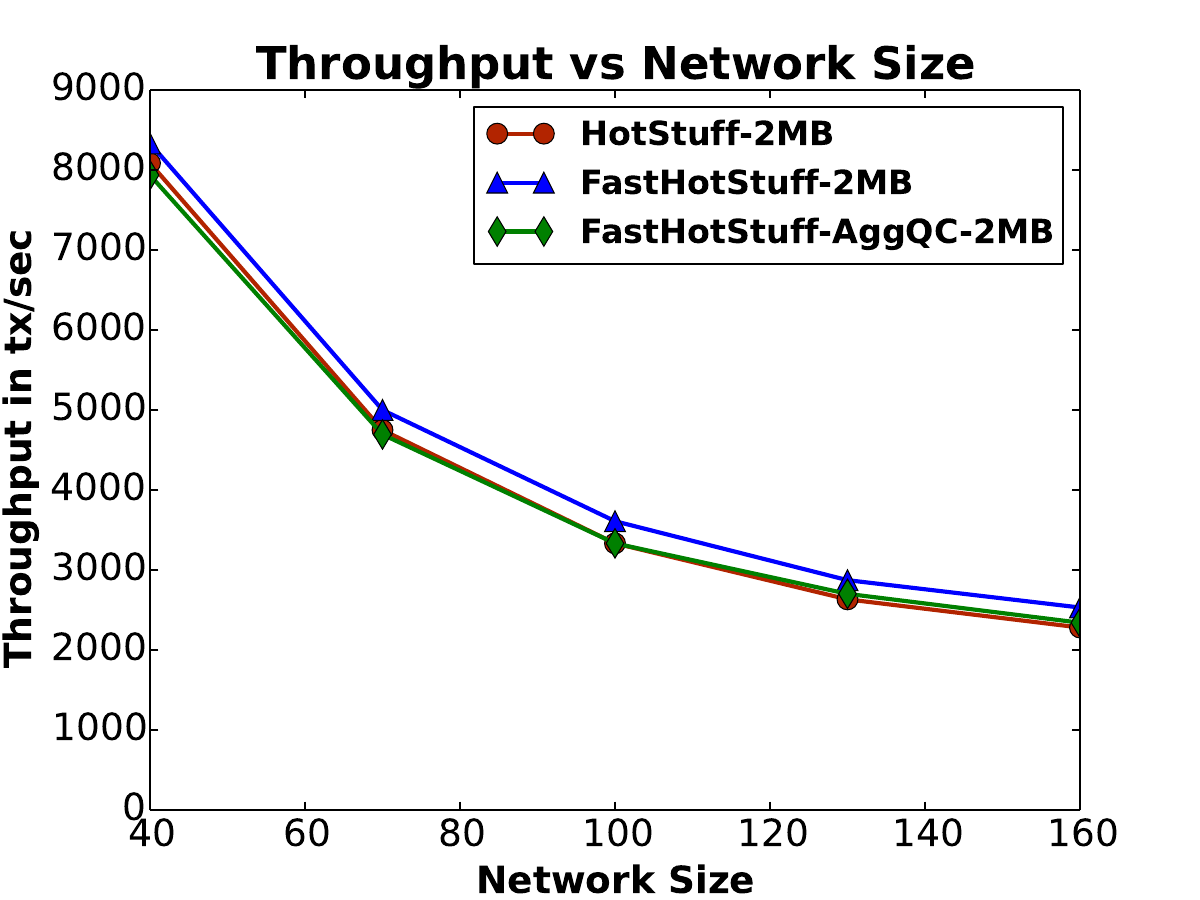}   
\label{fig:FHS-Throughput-2MB} 
\end{minipage}
}\hfill
\subfigure[The Throughput under the Forking Attack for 1MB Block Size.]{    
\begin{minipage}{4cm}
\centering                                                          \includegraphics[width=1.8in]{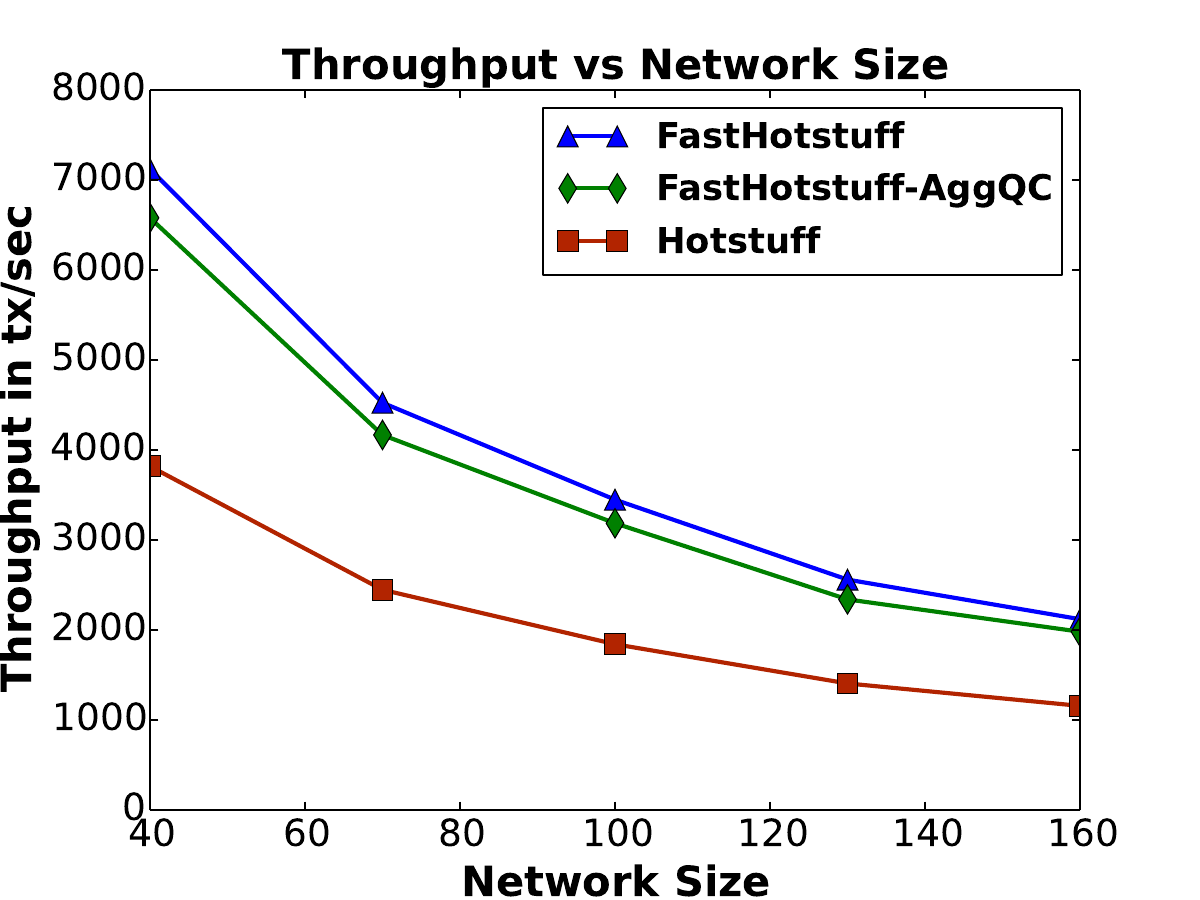}   
\label{fig:FHS-Throughput-1MB-UA}   
\end{minipage}
}\hfill
\subfigure[The Throughput under the Forking Attack for 2MB Block Size.]{    
\begin{minipage}{4cm}
\centering                                                          \includegraphics[width=1.8in]{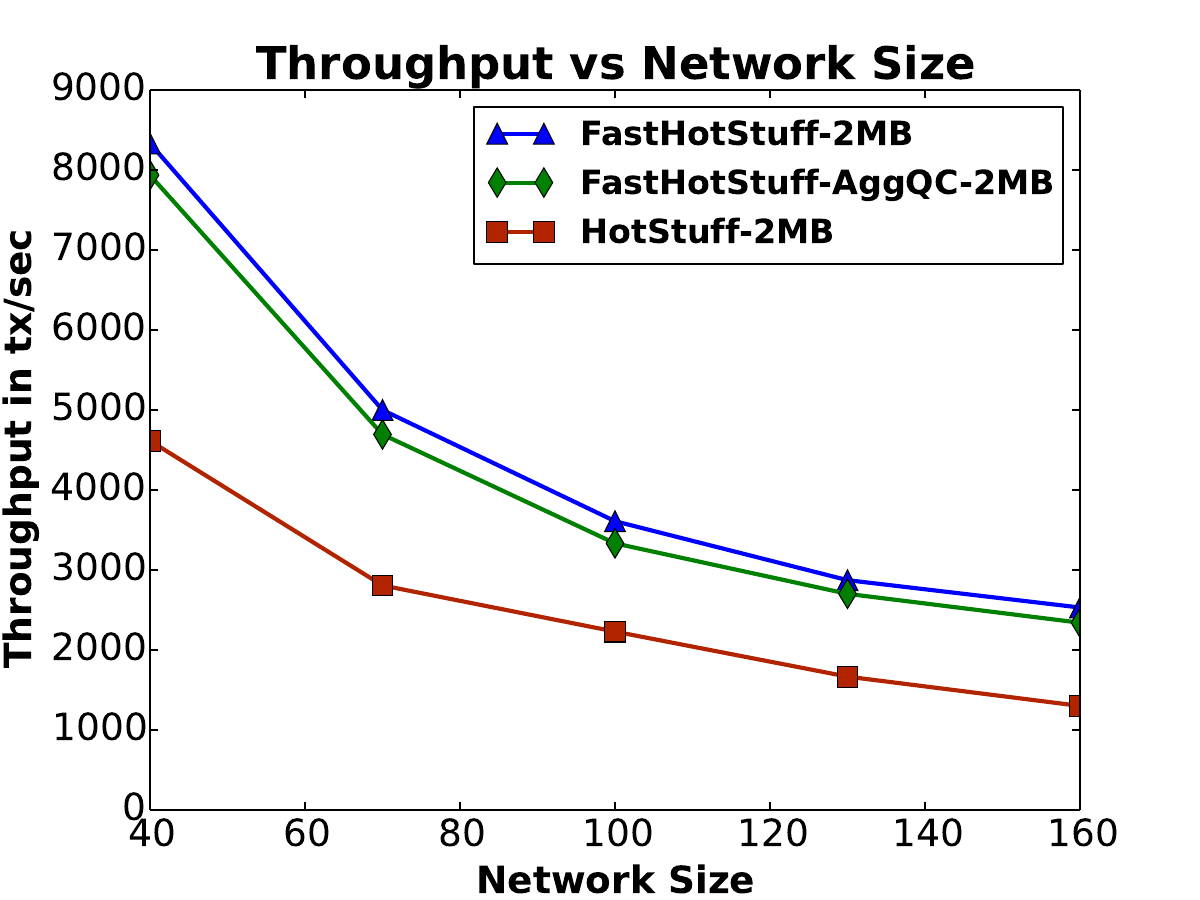}   
\label{fig:FHS-Throughput-2MB-UA}   
\end{minipage}
}

\caption{The Throughput comparison of HotStuff and Fast-HotStuff in LAN environment.}  
\label{fig:growthQualityQC}                                                        
\end{figure}


\begin{figure}
    \centering
    \includegraphics[scale=0.3]{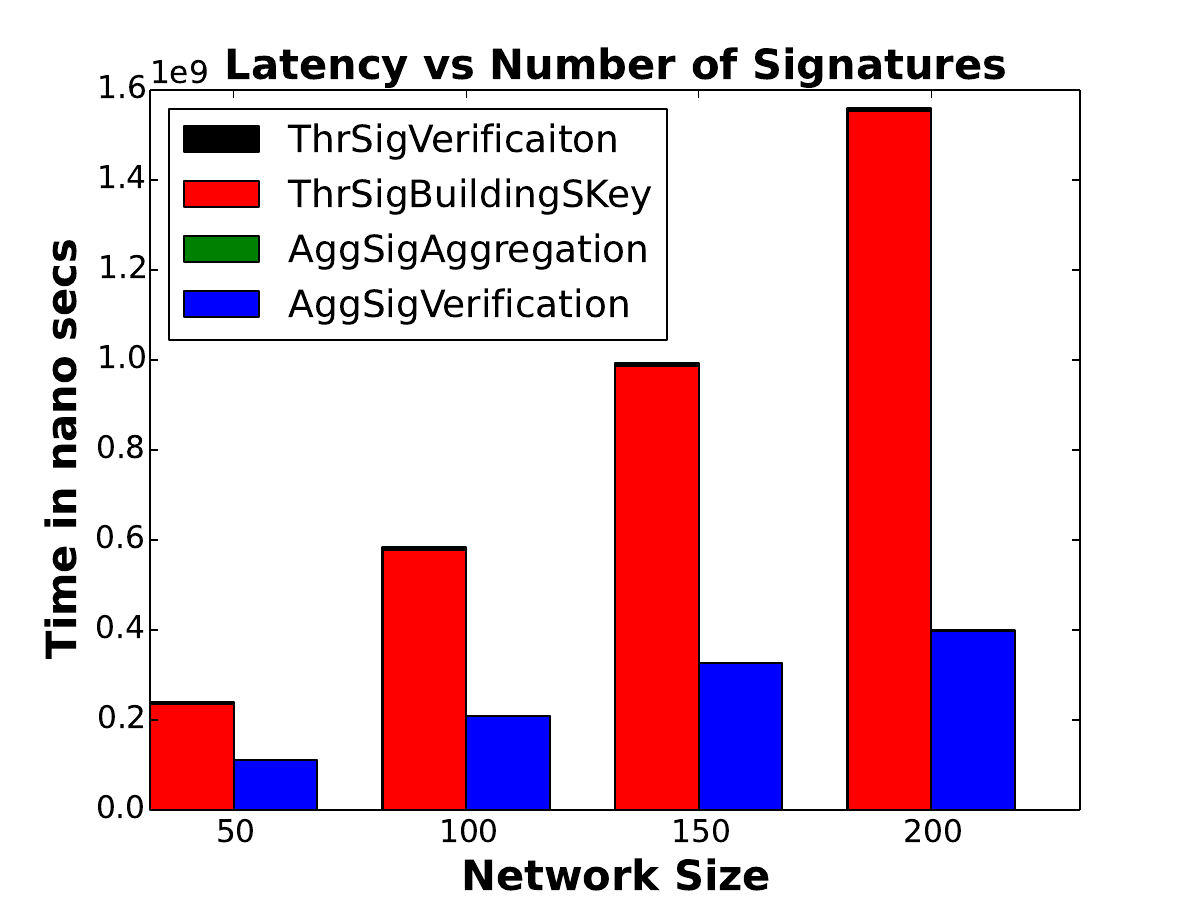}
    \caption{\textbf{Comparing threshold and Aggregated Signature Latency}.}
    \label{fig:Signature-Schemes}
\end{figure}


\begin{figure}[t]
\centering      
\subfigure[Latency for 1MB Block Size]{      
\begin{minipage}{4cm}
\centering                                          \includegraphics[width=1.8in]{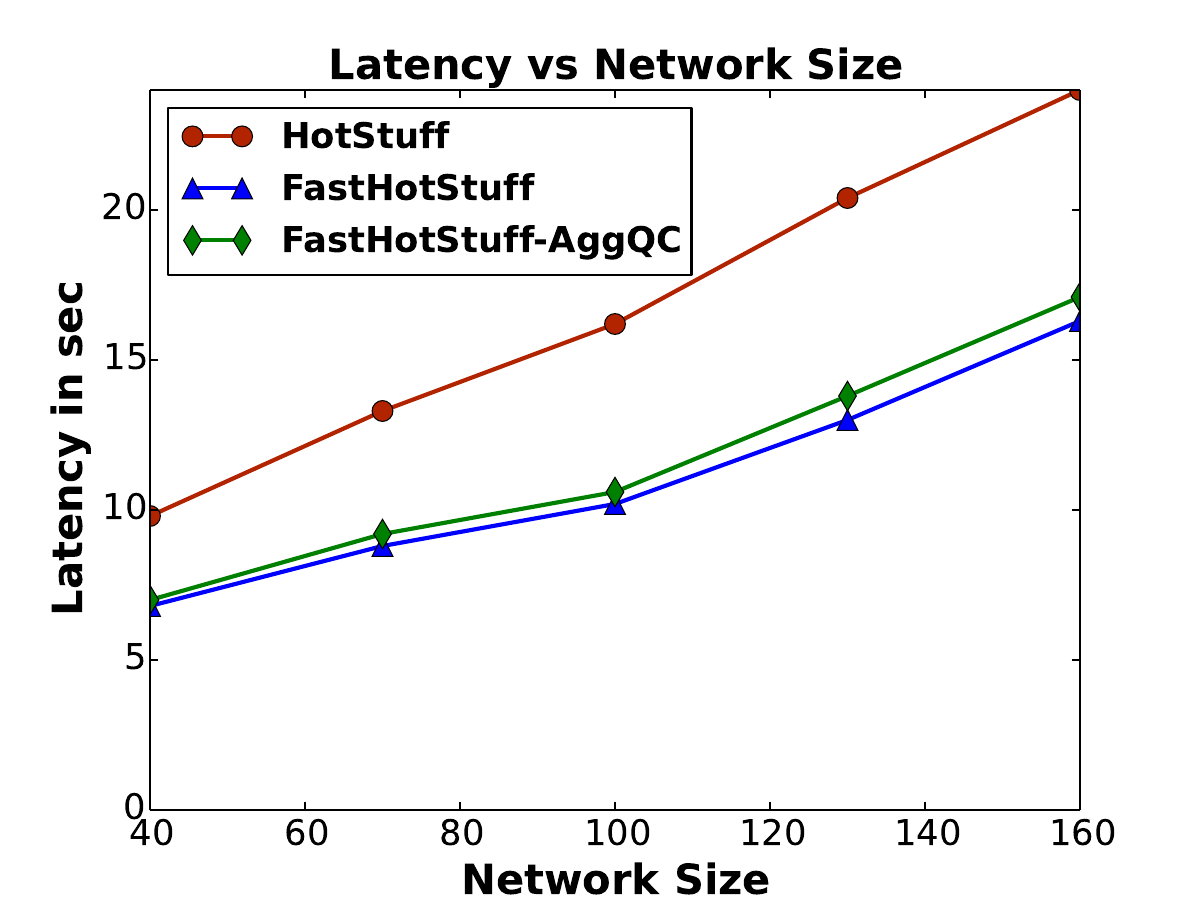}  
\label{fig:FHS-Latency-WAN-1MB}   
\end{minipage}
}\hfill
\subfigure[Latency for 2MB Block Size]{    
\begin{minipage}{4cm}
\centering                                              \includegraphics[width=1.8in]{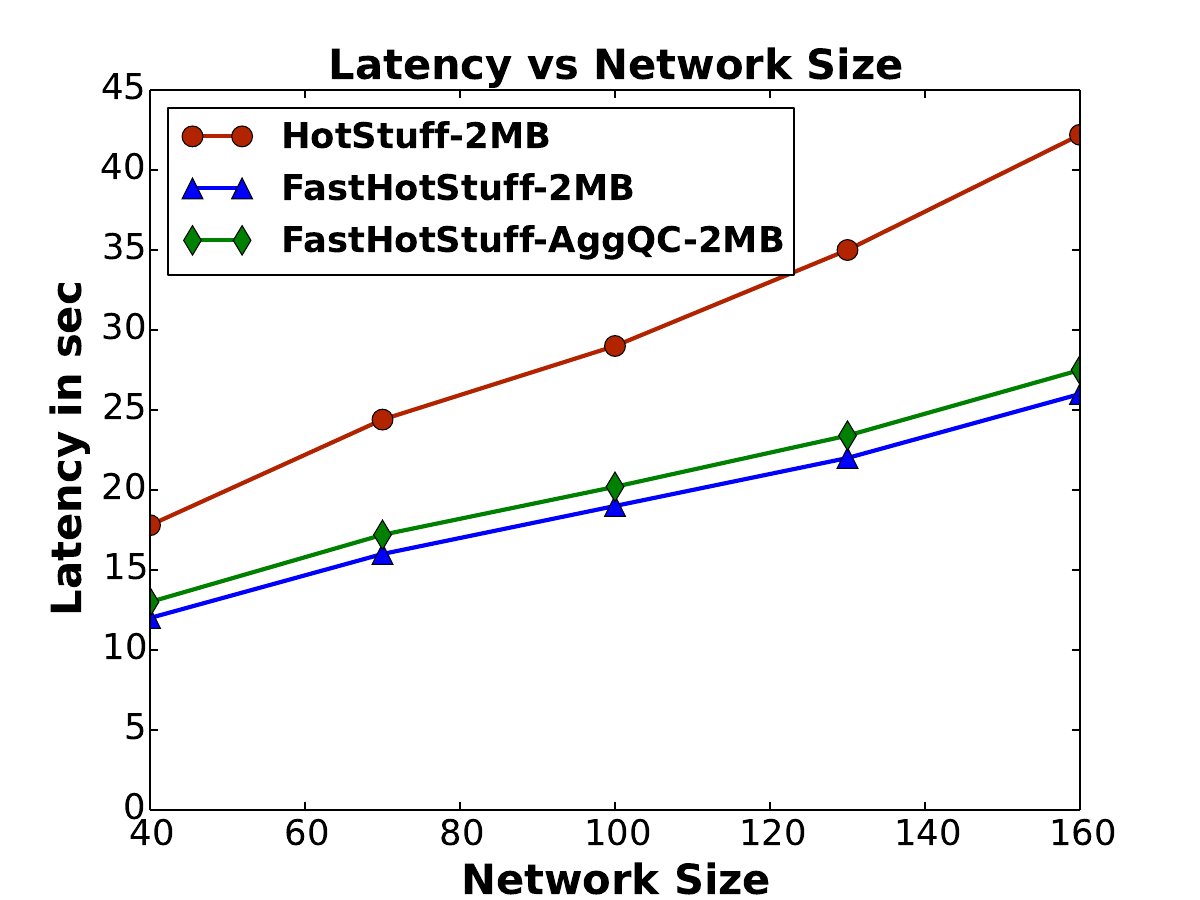}   
\label{fig:FHS-Latency-WAN-2MB} 
\end{minipage}
}\hfill
\subfigure[The Latency under the Forking Attack for 1MB Block Size.]{    
\begin{minipage}{4cm}
\centering                                                          \includegraphics[width=1.8in]{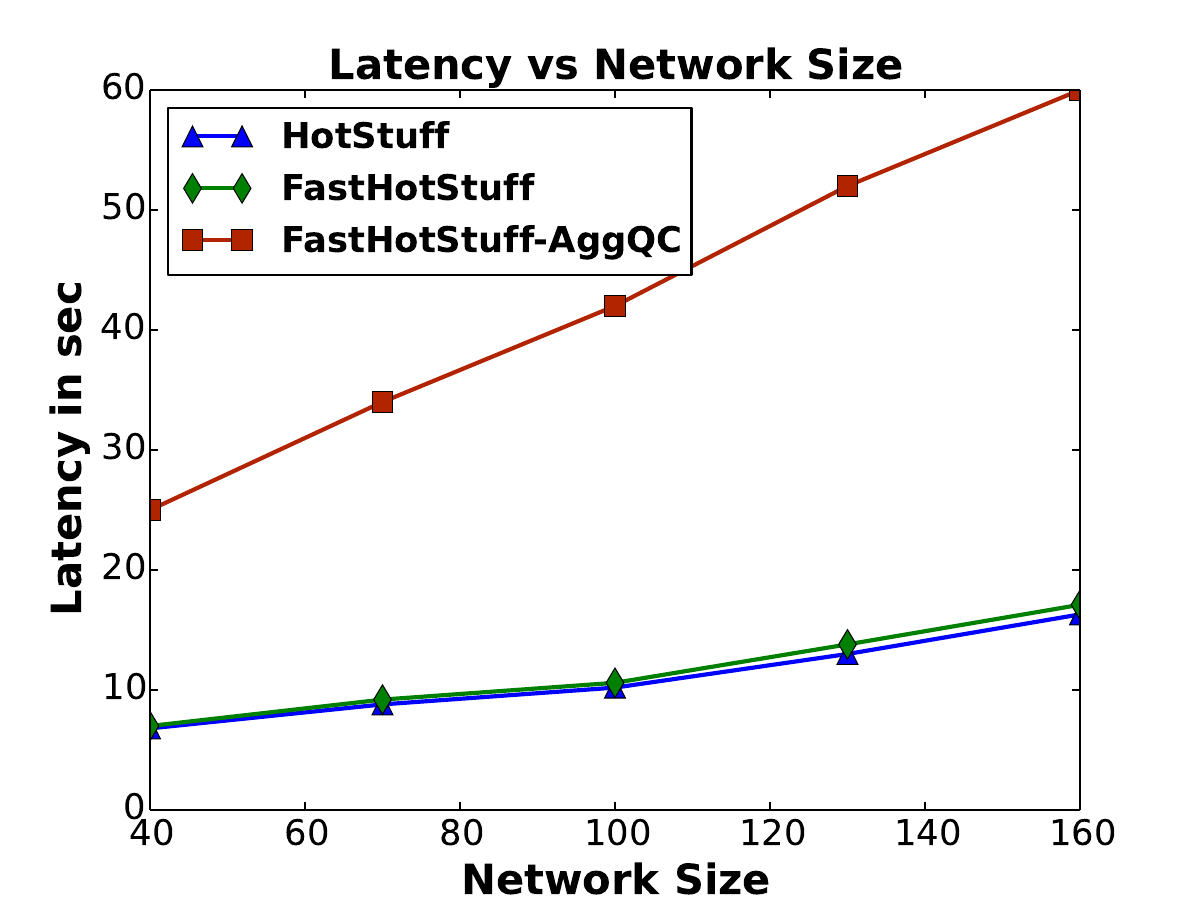}  
\label{fig:FHS-Latency-WAN-1MB-UA}   
\end{minipage}
}\hfill
\subfigure[The Latency under the Forking Attack for 2MB Block Size.]{    
\begin{minipage}{4cm}
\centering                                                          \includegraphics[width=1.8in]{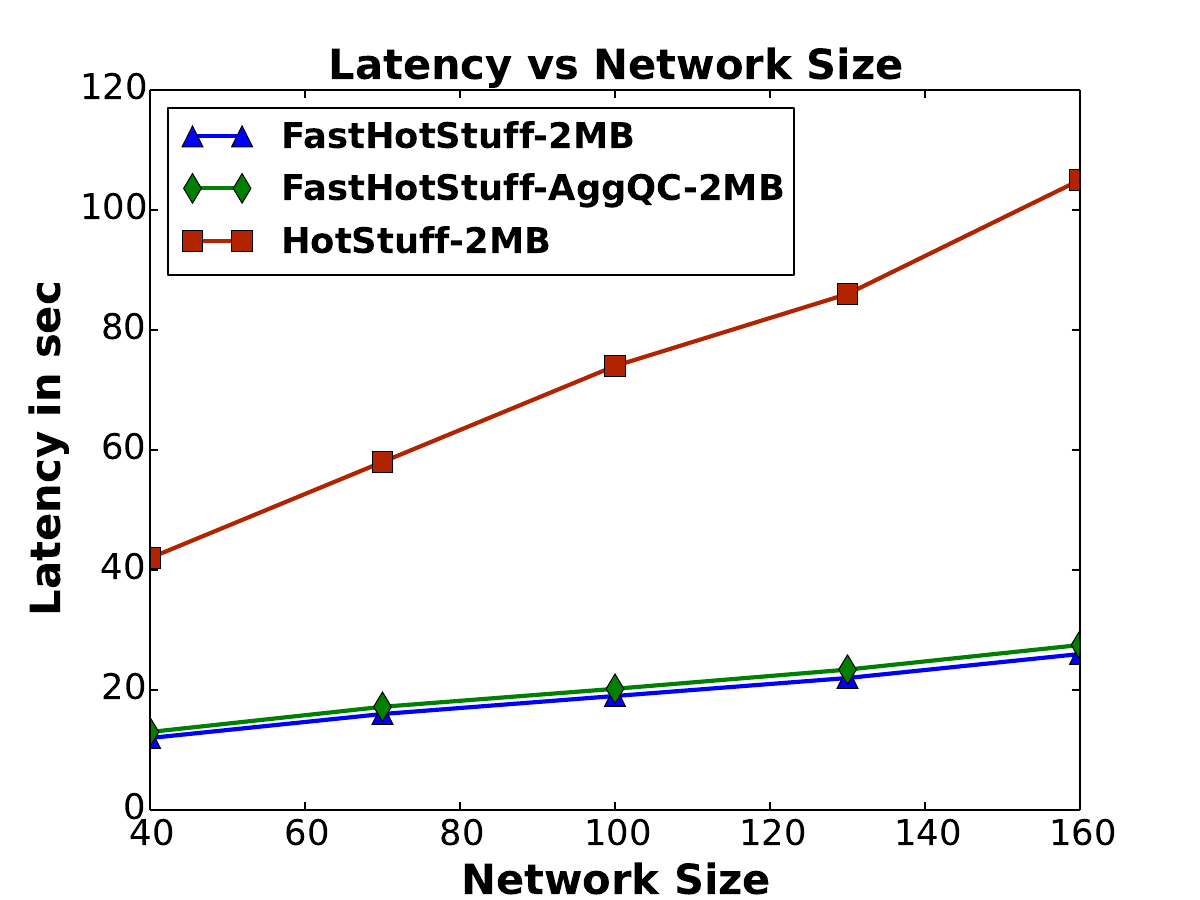}  
\label{fig:FHS-Latency-WAN-2MB-UA}   
\end{minipage}\hfill
}

\caption{The latency comparison of HotStuff and Fast-HotStuff in WAN environment.}  
\label{fig:LatencyQualityQC}                                 
                   
\end{figure}


\begin{figure}[t]
\subfigure[Throughput for 1MB Block Size]{      
\begin{minipage}{4cm}
\includegraphics[width=1.8in]{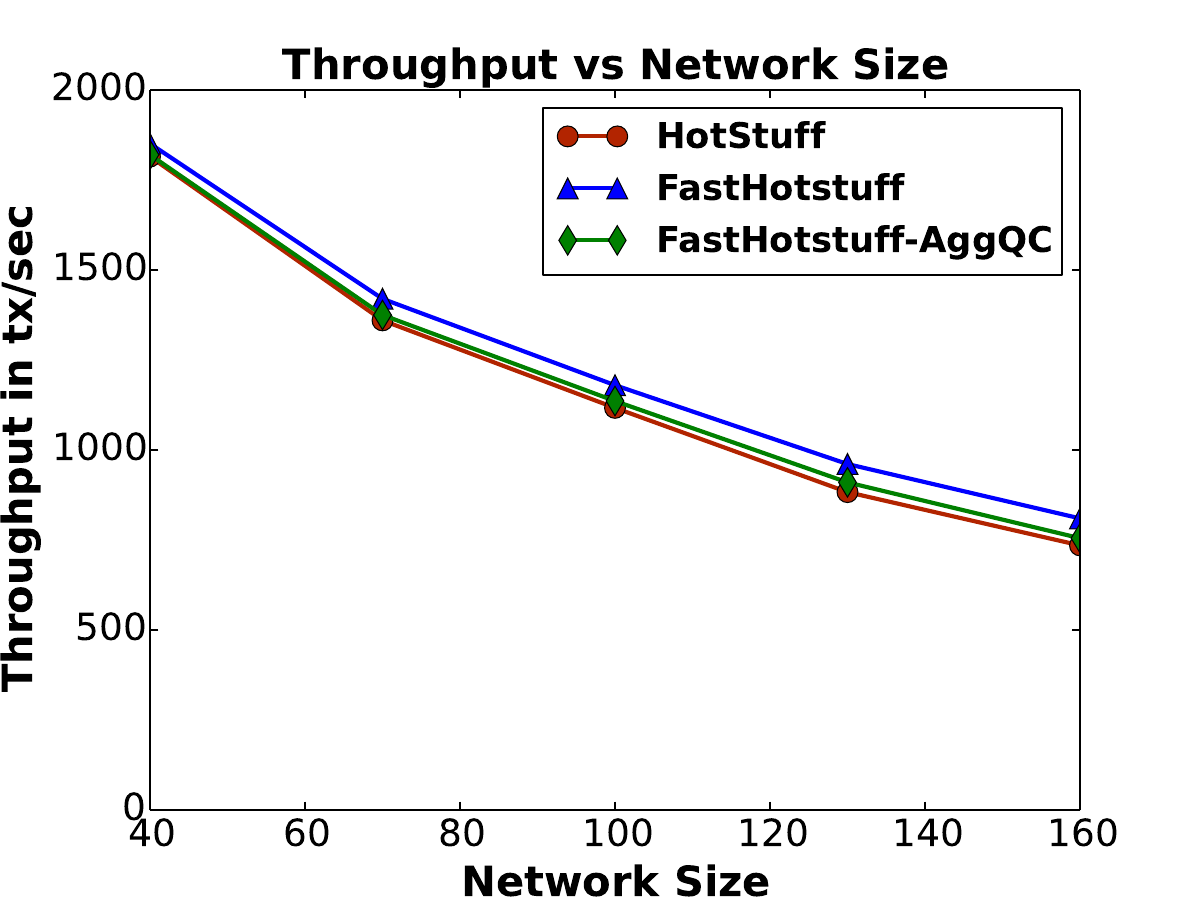}   
\label{fig:FHS-Throughput-WAN-1MB}   
\end{minipage}
}\hfill
\subfigure[Throughput for 2MB Block Size]{    
\begin{minipage}{4cm}
\includegraphics[width=1.8in]{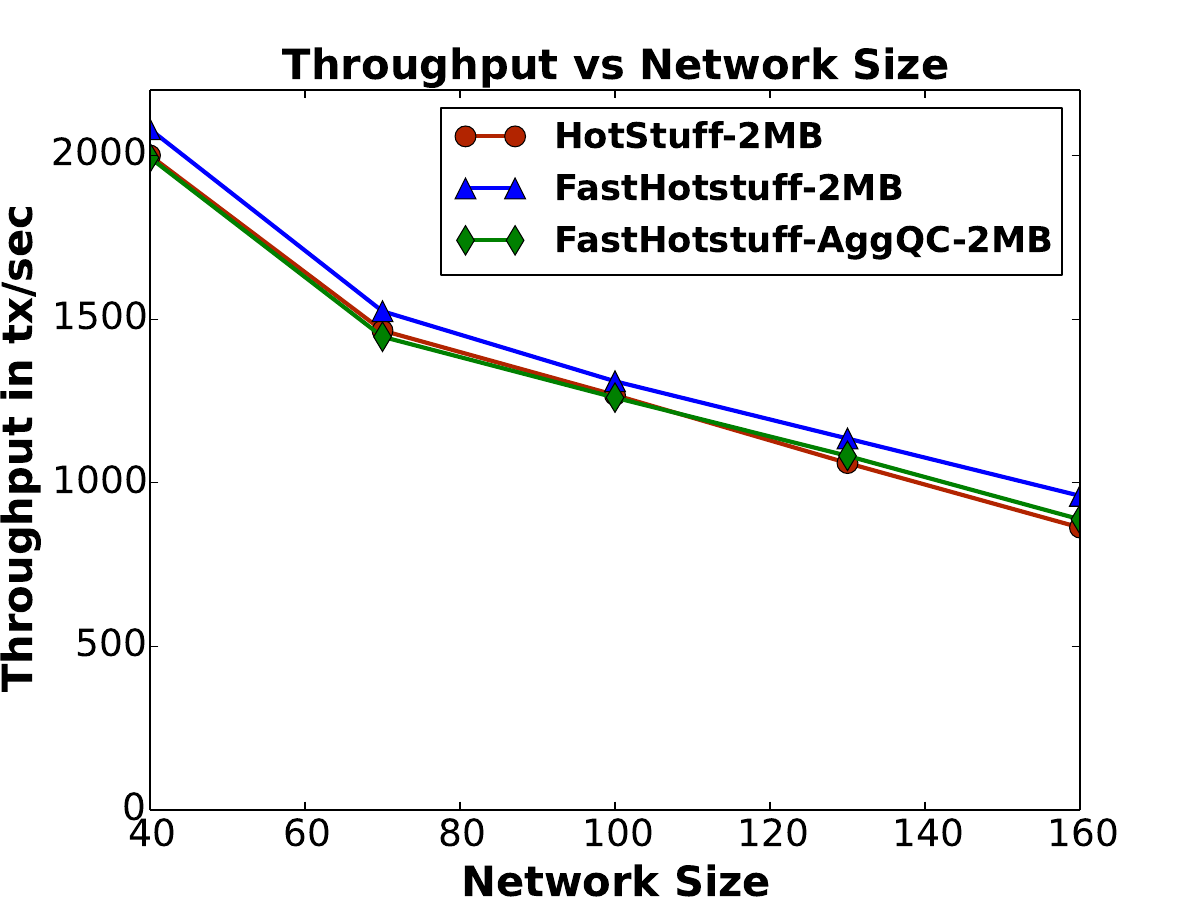}   
\label{fig:FHS-Throughput-WAN-2MB} 
\end{minipage}
}\hfill
\subfigure[The Throughput under the Forking Attack for 1MB Block Size.]{    
\begin{minipage}{4cm}
\includegraphics[width=1.8in]{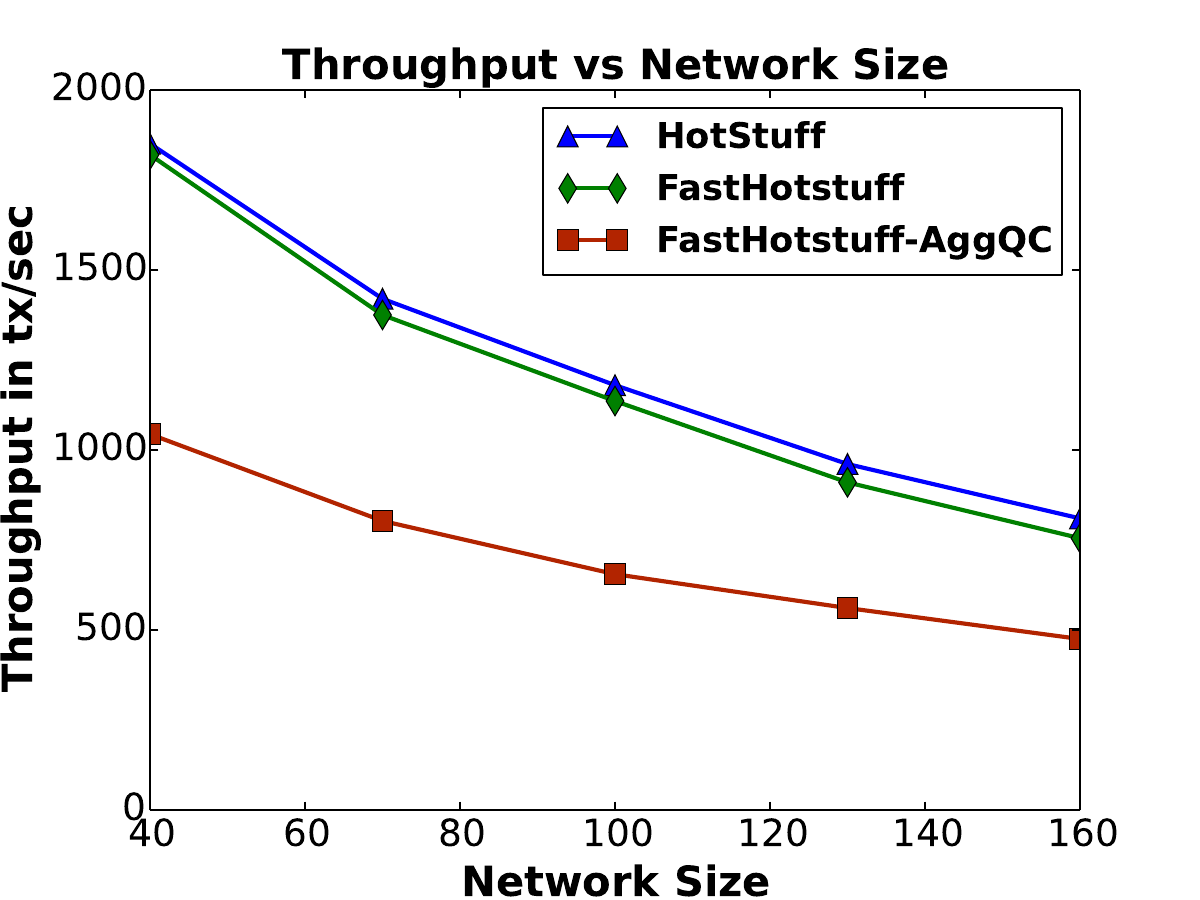}   
\label{fig:FHS-Throughput-WAN-1MB-UA}   
\end{minipage}
}\hfill
\subfigure[The Throughput under the Forking Attack for 2MB Block Size.]{    
\begin{minipage}{4cm}
\centering                                                          \includegraphics[width=1.8in]{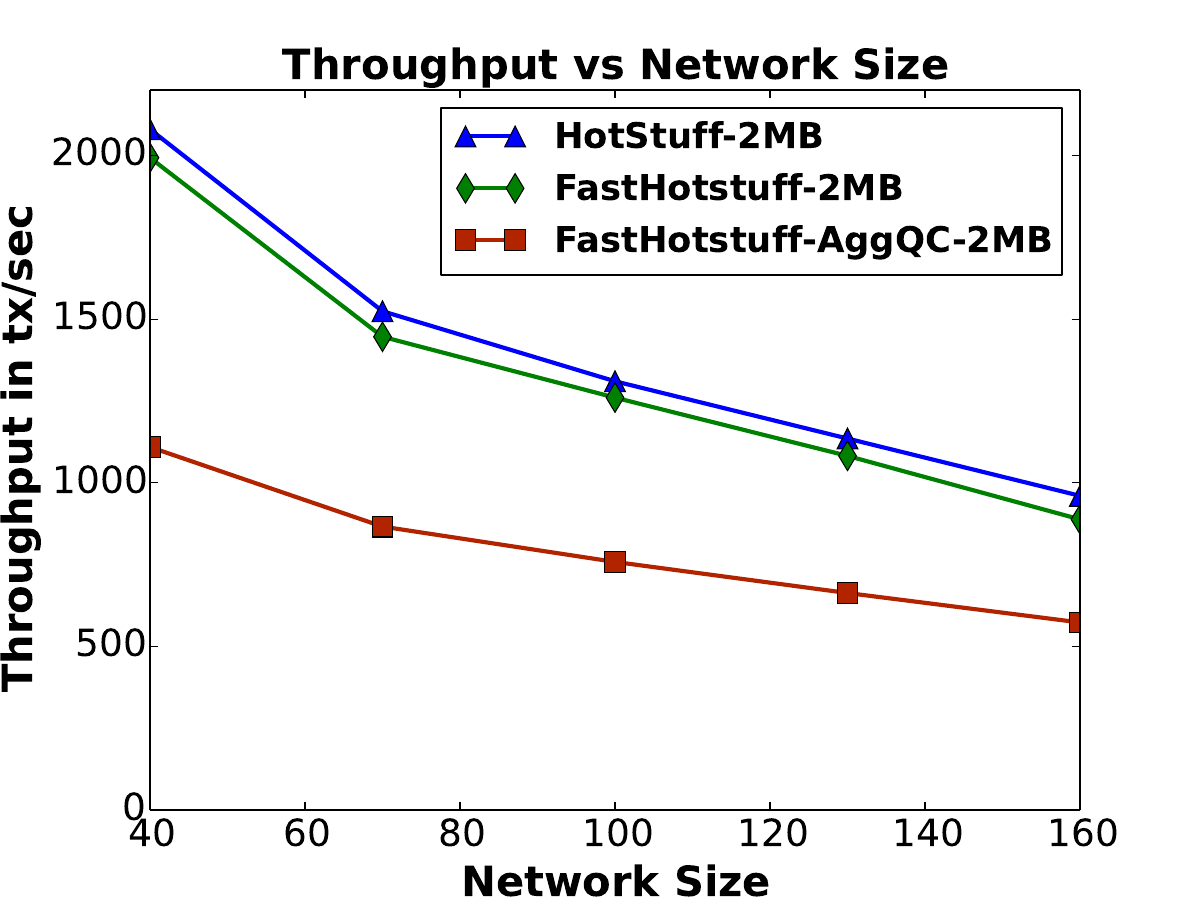}   
\label{fig:FHS-Throughput-WAN-2MB-UA}   
\end{minipage}
}

\caption{The Throughput comparison of HotStuff and Fast-HotStuff in WAN environment.}  
\label{fig:WAN-Throughput}                                                        
\end{figure}


\section{Evaluation}
\label{Section: Evaluation}
As pipelined HotStuff is widely adopted due to its higher throughput, we implemented  prototypes of pipelined Fast-HotStuff and  pipelined HotStuff using \textcolor{black}{the} Go programming language. For cryptographic operations, we used dedis advance\fangyu{d} crypto library in Go \cite{dedis-Signatures}. \textcolor{black}{Dedis library supports both BLS aggregated signatures as well as threshold signatures (used during evaluation) \cite{Short-Signatures-from-the-Weil-Pairing}.} Moreover, for hashing values\fangyu{,} we used SHA256 hashing \chen{in} the Go crypto library~\footnote{https://golang.org/pkg/crypto/sha256/}. We tested Fast-HotStuff's performance on Amazon cloud (AWS) with different network sizes  $40,70,100,130,160$ (to compare performance and scalability) and different block sizes ($1MB$ and $2MB$). We used $t2.micro$ replicas in AWS, where each replica has a single vCPU (virtual CPU) comparable to a single core with $1GB$ memory. The bandwidth was set to $500$ Mb/sec (62.5 MB/sec) and the latency between two endpoints was set to $50$ msec.  We also performed forking a attack on each network of different sizes,  $k=1000$ times, and took \textcolor{black}{the} mean of the average throughput of the network and the latency incurred by blocks affected by this attack.

We compared pipelined Fast-HotStuff's performance (throughput and latency) against pipelined HotStuff in three scenarios: 1) during happy path (when no failure occurs)\fangyu{,} therefore\fangyu{,} the primary only includes the $QC$ in the block (red vs blue curves), 2) when the previous primary fails and the next primary in the Fast-HotStuff has to include the aggregated $QC$ (($n-f$) $QC$s) in the block (red vs green curves), and 3) when HotStuff and Fast-HotStuff come under forking attack. Results achieved from each case \chen{are} discussed below:
\begin{enumerate}
    \item As the results shown in Figure \ref{fig:FHS-Latency-1MB} and \ref{fig:FHS-Latency-2MB}, during happy path when no failure occurs\chen{,} Fast-HotStuff \chen{outperforms} HotStuff in terms of latency. HotStuff's throughput (in red) slightly decreases against Fast-HotStuffs throughput (in blue) due to $O(n^2)$ time complexity for interpolation calculation at the primary (for threshold signatures in HotStuff) when $n$ increases (Figure \ref{fig:FHS-Throughput-1MB} and \ref{fig:FHS-Throughput-2MB}). 
    \item We did not consider the timeout period taken by replicas to recover\chen{,} as our main objective is to show that even the inclusion of aggregated $QC$ (small overhead) after a primary failure has a negligible effect on pipelined Fast-HotStuff's performance. 
Therefore, just after GST (after primary failure), when an honest (correct) primary is selected, HotStuff does not incur additional overhead. \fangyu{However,} HotStuff needs an additional round of consensus during happy (normal) as well as unhappy (failure) cases in comparison to Fast-HotStuff. As a result, the throughput and latency of HotStuff during the happy path or unhappy path just after primary failure is the same and is shown by the red curve (by ignoring the timeout period).  
Though, Fast-HotStuff's throughput slightly decreases, due to the $AggQC$ (a small overhead) as shown in \ref{fig:FHS-Throughput-1MB} and \ref{fig:FHS-Throughput-2MB}, 
but the throughput is still comparable to the HotStuff throughput.
On the other hand, Fast-HotStuff's latency (in green) is lower than HotStuff's latency (red) as shown in Figure \ref{fig:FHS-Latency-1MB} and Figure \ref{fig:FHS-Latency-2MB}.

\item Forking attack effects in Fast-HotStuff and HotStuff can be observed in Figure \ref{fig:FHS-Latency-1MB-UA} and \ref{fig:FHS-Latency-2MB-UA} and Figure \ref{fig:FHS-Throughput-1MB-UA} and \ref{fig:FHS-Throughput-2MB-UA}.
\textcolor{black}{Forking attack (the use of older $QC$ in a block) is allowed in HotStuff, but since forking  is not allowed in Fast-HotStuff, the proposal will be rejected and the proposer will not be chosen again.} 
 Hence, forking attack has no significant effect on Fast-Hotstuf performance. 
\end{enumerate}

\label{SubSection:Performance attacks}
\textbf{Choice of Signatures}
 HotStuff uses \textit{t-out-of-n} threshold signatures \cite{Short-Signatures-from-the-Weil-Pairing} and the Fast-HotStuff uses aggregated signatures as stated previously. 
 To better understand the effects of  signature schemes on each protocol's performance, we also compared the latency caused by each signature type during consensus. Higher latency caused by cryptographic operations during consensus will result in higher latency and lower throughput.

In both aggregated signature (used in Fast-HotStuff) as well as threshold signature (used in HotStuff) there are three main steps for signature verification in HotStuff and Fast-HotStuff. First, each replica signs a message (vote, $NEWVIEW$) and sends it to the primary. Second, the primary build an aggregated signature or threshold signature from $n-f$ messages received from distinct replicas and sends it back to each replica. Third, each replica verifies the aggregated or threshold signature associated with the messages received from the primary.
The cost for the first step is small and constant (for constant size message) for both threshold signature as well as aggregated signature. Therefore, we ignore this cost. The computation cost of aggregating signatures into a single constant size aggregated signature by the primary replica in Fast-HotStuff  is small, resulting in a negligible delay.
It is shown by the green bar stacked over top of the blue bar in Figure \ref{fig:Signature-Schemes}. But since the time taken by signature aggregation in Fast-HotStuff's primary is so small, it is hardly visible in the graph.
However, the cost of building the threshold signature from $n-f$ partial signatures received by the primary replica in HotStuff from $n-f$ replicas (shown by the red bar) in the threshold signatures in HotStuff is quadratic as it uses $O(t^2)$ ($t=n-f$) time polynomial interpolation. 
Conversely, the computation cost of verification in threshold signature is small per replica, shown by the black bar stacked over the red bar in Figure \ref{fig:Signature-Schemes}.
However, the cost of verifying aggregated signature in the aggregated signature scheme is linear, resulting in a linear delay shown by the blue bar.
In summary, we can see that building threshold signatures in the primary from partial signatures is expensive in threshold signatures. Although threshold signatures have constant verification costs, the latency induced in the primary prevents threshold signatures from having  performance gain in comparison to the aggregated signature scheme.

\textbf{WAN Performance} \textcolor{black}{
Furthermore, to better understand the effect of high latency over the Wide Area Network (WAN), we did performance tests over the Wide Area Network (WAN). Nodes are arranged in three different AWS regions including US-East-1 (North Virginia) region, EU-Central-1 (Frankfurt), and US-East-1 (North Virginia). The latency between the US-East-1 (North Virginia) region and EU-Central-1 (Frankfurt) was $88ms$ , US-East-1 (North Virginia) and South-America-East-1 (Sao-Paulo) was $140ms$ and EU-Central-1 (Frankfurt), and South-America-East-1 (Sao-Paulo) was $227ms$.
It can be seen both protocols experience lower throughput over the WAN in comparison to the throughput within a single data center.}

\textcolor{black}{ Similarly, in the WAN setting, Fast-HotStuff has lower latency than the HotStuff (as shown in Figure \ref{fig:FHS-Latency-WAN-1MB} and \ref{fig:FHS-Latency-WAN-2MB}). During the normal case, a block of 1MB in HotStuff will take at most $8$ seconds (approximately) more than a 1MB block in Fast-HotStuff (with network size $160$). This means clients have to wait for an additional $8$ seconds to get transaction confirmation. But the situation can get worse if a forking attack is performed by Byzantine nodes (shown in the Figure \ref{fig:FHS-Latency-WAN-1MB-UA}). In such a case, on average a single block can take additional $40$ seconds to get confirmed. This additional block confirmation time in HotStuff can reach more than $16$ seconds during normal cases and more than $70$ seconds when a forking attack is performed with a 2MB block size  (with network size $160$) as shown in Figure \ref{fig:FHS-Latency-WAN-2MB-UA}. This high latency is due to the reason that in chained HotStuff four blocks (with at least the first three with consecutive views also called two-direct chain) have to be added for the first block (among the four) to get committed. But during the forking attack, the chain of blocks from consecutive views breaks, hence the network has to wait for the completion of the \emph{two-direct chain} condition to meet so that block/s can get committed. Similarly, HotStuff's throughput in WAN is comparable with the Fast-HotStuff's throughput but significantly degrades under  \textcolor{black}{a} forking attack (as shown in Figure \ref{fig:WAN-Throughput}). 
}

\section{Related Work}\label{Section: Related Work}
\begin{figure*}
    \captionof{table}{Comparison of BFT Protocols}
        \centering

\addtolength{\tabcolsep}{-0.4em}\begin{tabular*}{\textwidth}{c@{\hspace{1.5\tabcolsep}}cccccc}

\hline
Protocol & \multicolumn{1}{p{2cm}}{\centering Authenticators Sent During View Change} & \multicolumn{1}{p{3cm}}{\centering Optimistic Responsiveness }& \multicolumn{1}{p{2cm}}{\centering \# of Rounds During Happy Path} & \multicolumn{1}{p{2cm}}{\centering Leader Paradigm }& \multicolumn{1}{p{2cm}}{\centering Authenticators Verified  during View Change} &  \multicolumn{1}{p{2cm}}{\centering \# of Rounds from View Change to receiving first block} \\
\hline
 PBFT\cite{Castro:1999:PBF:296806.296824} & $O(n^2)$ or $O(n^3)$  & Yes & 2 & stable & $O(n^2)$ or $O(n^3)$  &   2\\ 
SBFT\cite{SBFT} & $O(n^2)$ & Yes & 1-2 & stable & $O(n^2)$ &  2 \\ 
 Tendermint \cite{tendermint} & $O(n)$ &No & 2 & Rotating & $O(n)$ &  1\\
 Casper FFG \cite{Casper} & $O(n)$ & No & 2 & Rotating & $O(n)$  & 1\\
 
 HotStuff\cite{Hot-stuff} & $O(n)$& Yes & 3 & Rotating & $O(n)$ &  1\\
 LibraBFT (DiemBFT) & $O(n)$ & Yes & 3 & Rotating & $O(n)$ &  1\\
 Fast-HotStuff  & $O(n)$ or $O(n^2)$   & Yes & 2& Rotating & $O(n)$ &  1\\
\hline
\label{Table: Protocol-Comparision}
\end{tabular*}
\end{figure*}

There have been multiple works on improving BFT protocols performance and scalability \cite{Jalal-Window,SBFT,BFT-SMART}. But these protocols suffer from expensive view changes  where either the message complexity or the number of signatures/authenticators to be verified grows quadratic\fangyu{ly}. Moreover, these protocols do not employ a primary rotation mechanism.  \textcolor{black}{Current solutions (see Table \ref{Table: Protocol-Comparision}) suffer from at least one of the shortfalls below.}

\textbf{Lack of optimistic responsiveness} 
Protocols that offer a simple mechanism of leader/primary replacement include Casper \cite{Casper} and Tendermint \cite{tendermint}. But both of these protocols have synchronous cores, where, replicas in the network have to wait for \fangyu{the} maximum network latency before moving to the next round. In other words, these protocols lack responsiveness, which will result in high-performance degradation. 
The BFT consensus protocol proposed in \cite{OnOptimality-BFT-CCS} 
is responsive if \textcolor{black}{the} number of Byzantine replicas is less than $n/4$ ($f<n/4$). The protocol losses responsiveness and waits for the maximum network delay, when $f \geq n/4$. 
Fast-HotStuff is always responsive regardless of the value of $f$ and can tolerate $f < n/3$ Byzantine replicas.

\textbf{Expensive View Change}
PBFT \cite{Castro:1999:PBF:296806.296824} has quadratic message complexity during happy path or normal operation. During view change, each replica has to process at least $O(n^2)$ signatures. Moreover, PBFT does not use a rotating primary. On the other hand, Fast-HotStuff uses a rotating primary, pipelined Fast-HotStuff has linear view change during normal primary rotation (view change). In case of failure, primary rotation in Fast-HotStuff in each replica has to process only two aggregated signatures.

SBFT \cite{SBFT} has linear communication complexity during normal operation. SBFT does not employ a rotating primary mechanism. When Fast-HotStuff and SBFT use the same signature scheme, replicas in Fast-HotStuff will have to verify $O(n)$ signatures less than the SBFT.

\textcolor{black}{To address the higher latency and vulnerability to the forking attack, in a concurrent research effort DiemBFT has switched to a two-chain protocol in its new release.  During the normal phase, the new DiemBFT(V4) behaves similar to the Fast-HotStuff, it employs PBFT \cite{Castro:1999:PBF:296806.296824} based quadratic view change (as an intermediary solution). Fast-HotStuff has $O(n)$ times lower signature verification cost, during the unhappy path than this variant of DiemBFT(v4).
As a further improvement, the DiemBFT is implementing the asynchronous fallback from \cite{gelashvili2021jolteon} for the view change so that the protocol can even make progress in the event of asynchrony (including Denial-of-Service attacks on the primary).}

While asynchronous view change improves resilience, it suffers from high bit complexity. \textcolor{black}{For example,} for a network of $n=100$ replicas with the block size as $b_s=1MB$, the total number of bits sent by the primary is $n \times b_s + n \times 1.4 \times \frac{b_s}{100}=100 MB + n \times  \frac{1.4 MB} {100} = 101.4MB$ (block plus overhead sent to $n$ replicas). The added overhead in this case in Fast-HotStuff is $n  \times 1.4 MB /100 = 1.4MB$. 
For a proposal in \cite{gelashvili2021jolteon} since each replica has to broadcast a block of size $b_s$, therefore $n$ replicas broadcast $n$ blocks out of which only one block will eventually be added to the chain. In this case, we have $n \times n \times b_s = 100 \times 100 \times 1MB = 10^4 MB$ complexity. Therefore, $10^4$ $MB$ data have to be exchanged for a single block of $1MB$ to be committed (by each one of $100$ replicas). 
Hence, only $100$ MB of data is being consumed (committed) by the network. The additional $10^4-100=9900MB$  is overhead. Therefore, all resources involved in the transmission and processing of this large amount of data are wasted. Hence, in the above-mentioned setup, the view change overhead in Fast-HotStuff, though quadratic, is $1.4 MB$  whereas, in \cite{gelashvili2021jolteon} is $9900MB$. \textcolor{black}{ Similarly, in the Fast-HotStuff, signature verification cost during view change is $O(n)$ times lower than this variant of DiemBFT(v4).} As a result, deployment of this protocol \cite{gelashvili2021jolteon} \textcolor{black}{(as a variant of DiemBFT(v4))} will require high resources including bandwidth, processor (as each transaction within block and signatures have to be verified), and memory (for keeping blocks until only one chain is selected). Moreover, values of $n$ and $b_s$ have to remain in check.

\textbf{Suboptimal latency}
HotStuff \cite{Hot-stuff} is designed to not only keep a simple leader change process but also maintain responsiveness. These features along with \fangyu{the} pipelining optimization have provided an opportunity for wide adoption of the HotStuff \cite{libra-BFT} \fangyu{protocol}. HotStuff has linear view change, but we show that in practice during primary failure (unhappy path) Fast-HotStuff's view change performance is comparable to HotStuff. LibraBFT (DiemBFT) \cite{libra-BFT} is a variant of HotStuff. Unlike the HotStuff which uses threshold signatures, LibraBFT uses aggregated signatures. 
LibraBFT uses broadcast\fangyu{ing} during primary failure.
LibraBFT also has a three-chain structure and is susceptible to forking attacks. 




\textbf{View change latency} \textcolor{black}{ View change latency cannot be bounded during the asynchronous period. Therefore, we consider the number of rounds when a replica intends to change the view to the first time it receives a proposal from a correct primary (during the period of synchrony). This period can determine how quickly the protocol recovers when synchrony conditions meet in the presence of a correct primary.
It takes two rounds by a replica running PBFT or SBFT to receive the first proposal while having a successful view change.
} 

\textcolor{black}{Another beautiful concurrent work (Wendy) in  \cite{No-Commit-Proofs} has also addressed the higher latency during the happy path in HotStuff while supporting rotating primary.  Wendy has successfully avoided the quadratic authenticator complexity during the view change while maintaining only two rounds of latency during the happy path.  The tradeoff Wendy has made is that if the failed primary or the new (selected) primary (when less than  $f+1$ correct nodes are holding locks) is Byzantine, then it can  force the protocol to incur additional latency costs during the view change (in terms of the number of rounds). 
Furthermore, the paper also assumes that there is a known fixed bound on the difference among the views of replicas in the network. On the other hand, Fast-HotStuff is resilient against such performance attacks and does not have such an assumption. By contrast, the tradeoff Fast-HotStuff has made is that a primary has to send $O(n^2)$ authenticators after a view change. And we showed that the bandwidth cost of this overhead is small in practice and improved  the signature verification of the protocol by $O(n)$.} 

\textcolor{black}{The verification  cost of authenticators depends on the type of signature schemes used by the protocol. 
For example, the verification cost of  aggregated signatures is linear to the number of signatures being aggregated. Whereas, the verification cost of threshold signatures is constant. But on the other hand, as previously observed during experiments, the quadratic computational cost of Lagrange interpolation adds additional latency to the protocol. Lagrange interpolation is used  
by the primary to build a threshold signature from partial signatures. As a result, the constant size and verification cost of threshold signatures do not benefit the protocol performance.  Though a recent work \cite{Scalable-Threshold} has tried to address the high computational cost of interpolation in threshold signatures with a tradeoff in verification and communication cost.
}

\section{Conclusion}
\label{Section:Conclusion}
In this paper\fangyu{,} we presented \textcolor{black}{Fast-HotStuff} which is a two-chain consensus protocol with efficient and simplified view change. It achieves consensus in two rounds of communication. Moreover, Fast-HotStuff is robust against forking attacks. Whereas HotStuff lacks resilience against forking attacks.  Fast-HotStuff achieves these unique advantages by adding a small amount of overhead in the block. This overhead is only required in rare situations when a primary fails. Our experimental results show that whether the overhead is included in the proposed block or not, Fast-HotStuff outperforms HotStuff in terms of latency. Fast-HotStuff outperforms HotStuff in terms of latency and throughput under forking attacks.


\bibliographystyle{IEEEtran}
\bibliography{new}

 \title{Appendix}
{\maketitle}
\label{Appendix}
\renewcommand{\thesubsection}{\Alph{subsection}}
\section*{Appendix}

\subsection{Performance Penalty of Direct Chain Condition.} The direct chain has a high cost in HotStuff protocol due to the forking attack. Each time when a Byzantine primary over-rides blocks for honest primaries it not only reduces the throughput but also prevents direct chain formation. This results in increased latency. On the other hand, since a forking attack is not possible in Fast-HotStuff, a direct parent cannot be broken through forking. It is also possible (in HotStuff as well as Fast-HotStuff) that a direct chain is broken when a primary fails to propose a block during a specific view  ($v$) and replicas timeout. In such a case, not only the failed primary does not propose any block but it also does not distribute the $QC$ for the block from the previous view ($v-1$). For example, as it can be seen in Figure \ref{fig:view-change}, the primary in the view $v$ has failed. Therefore, it could not disseminate the block for the view $v$ along with the $QC$ for the view $v-1$. As a result, the next primary, $v+1$, will have the $QC$ for the view $v-2$ as the $highestQC$. Hence, the block proposed in the view $v+1$ will have the $QC$ from $v-2$. Byzantine primaries can exploit the rotating primary behavior to timeout regularly and also avert one block from the previous view. 
A primary that times out frequently can be blacklisted as done in \cite{Spinning-BFT}. There can be up to $f$ number of blacklisted primaries. If an additional primary needs to be added then, the first primary from the blacklist queue is removed. In this way, the negative effect of primaries that deliberately timeout on performance can be significantly reduced.

 \subsection{Primary Selection and Liveness}
 In addition to the other generic BFT requirements for the liveness, in HotStuff and its variants, it is also important to make sure that the direct chain conditions meet eventually. If the protocol is not able to fulfill direct chain condition, then even if an honest primary is in-charge and the network is synchronous, liveness may not be guaranteed. Because no block will be committed and hence, clients will not receive response about successful execution of their transactions. Therefore, below we provide additional component of liveness proof for Fast-HotStuff under different primary selection mechanisms.
 
 \textbf{Random Primary Selection:}
 In this case the primary is chosen randomly and uniformly from a set of $N$ replicas.
 The probability of an honest primary selected is approximately $(\frac{2}{3})$. The probability of a bad event where two consecutive honest primaries not selected  is $P_b=1-(\frac{2}{3})\times (\frac{2}{3})$. By treating such a bad even as the Bernoulli trial, we have that the bad event is triggered $k$ consecutive times with probability at most $P_b^k$. As it can be seen $P_b^k$ approaches fast to $0$ at an exponential rate as $k$ grows linearly.  Moreover, by blacklisting up-to $f$ primaries for frequent failure (as stated previously), a one-direct chain can be achieved more quickly and frequently.
 Therefore, in case of random primary selection one-chain will be formed eventually.
 

\begin{figure}
    \centering
    \includegraphics[width=9cm,height=2.5cm]{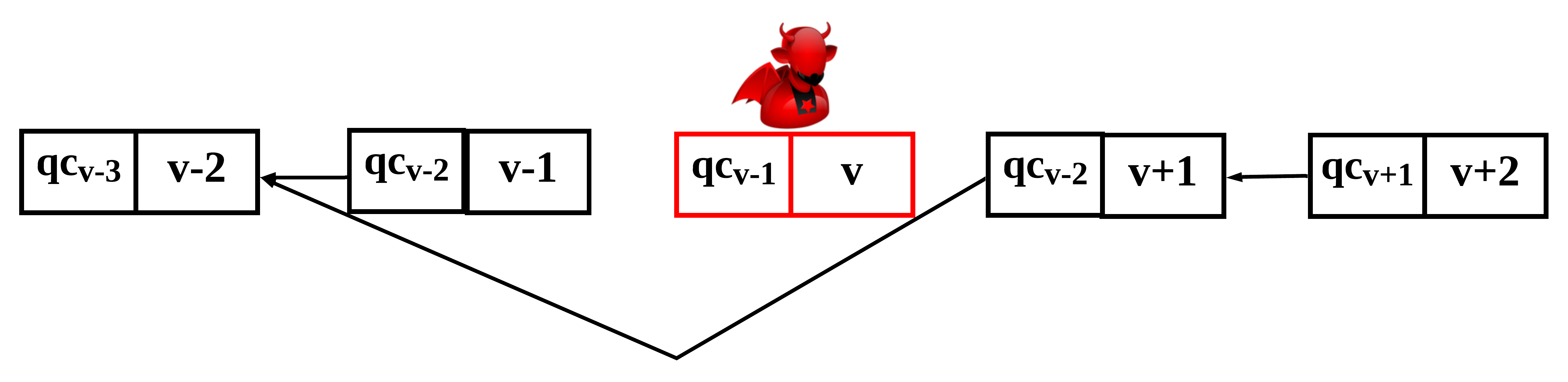}
    \caption{\textbf{Breaking one-chain through timeout by a Byzantine primary}.}
    \label{fig:view-change}
\end{figure}
 \textbf{Round-robin Primary} 
 During happy cases when there is no primary failure we know that a Byzantine primary can't perform a forking attack. Hence, one-chain will be formed between two consecutive blocks during $T_f$.
 During the view change due to the primary failure, we try to pick the best attack strategy for Byzantine primaries to prevent the formation of one-chain between two blocks from honest primaries (to cause liveness failure). The formation of one-chain can only be prevented if at least one Byzantine replica is selected as primary after each honest primary. In other words, honest primaries should never be selected consecutively to avoid one-chain formation. The best strategy for Byzantine primaries will be to prevent the maximum number of honest primaries from one-chain formation.
 There is at most $f$ number of Byzantine replicas out of  $n=3f+1$ replicas. This means if each Byzantine replica is selected as a primary after one honest primary, then the $f$ honest primaries will not be able to make one-chain. But the remaining $f+1$ honest primaries can make one-chain.  For $f=0$, since no Byzantine replica will be selected as a primary, hence, no block proposed by an honest primary will be averted. Therefore, one-chain commit condition is satisfied. For $f \geq 1$, we have $f+1 \geq 2$, hence there will be at least two consecutive honest primaries whose blocks will form one-chain and fulfill the commit condition. Therefore, if primaries are selected in a round-robin manner Fast-HotStuff holds liveness.
\end{document}